\tikzset{join/.code=\tikzset{after node path={%
\ifx\tikzchainprevious\pgfutil@empty\else(\tikzchainprevious)%
edge[every join]#1(\tikzchaincurrent)\fi}}}
\tikzset{>=stealth',every on chain/.append style={join},
         every join/.style={->}}
\tikzstyle{labeled}=[execute at begin node=$\scriptstyle,
\newcommand\full{}
\newcommand\extraTPS{}
\newtheorem{example}{Example}
\newtheorem{theorem}{Theorem}
\newtheorem{definition}{Definition}
\newtheorem{remark}{Remark}
\newtheorem{prop}{Proposition}
\newtheorem{corollary}{Corollary}
\newtheorem{claim}{Claim}
\newcommand{\efxwc}{\text{EFX}_{\text{WC}}}
\newcommand{\efowc}{\text{EF1}_{\text{WC}}}
\newcommand{\eflwc}{\text{EFL}_{\text{WC}}}
\newcommand{\efx}{\text{EFX}}
\newcommand{\mms}{\text{MMS}}
\newcommand{\efl}{\text{EFL}}
\newcommand{\dual}[1]{{#1}^{\circ}}
\newcommand{\bundle}{B} 
\newcommand{\bundlei}[1][i]{\bundle_{#1}} 
\newcommand{\bundles}{\mathbf{\bundle}} 
\newcommand{\alloc}{A} 
\newcommand{\allocs}{\mathbf \alloc} 
\newcommand{\alloci}[1][i]{\alloc_{#1}} 
\newcommand{\alloclist}[1][n]{\left(\alloc_1,\dots,\alloc_{#1}\right)} 
\newcommand{\agents}{\mathcal{N}} 
\newcommand{\items}{\mathcal{M}} 
\newcommand{\types}{\mathcal{T}}
\newcommand{\ea}[1]{\mathcal{X}\left(#1\right)}
\newcommand{\vau}[1][i]{v_{#1}}
\newcommand{\val}[1][i]{v_{#1}} 
\newcommand{\vai}[2][i]{v_{#1}\left(#2\right)} 
\title{\fontsize{16}{20} \selectfont Unified Fair Allocation of Goods and Chores via Copies}
\author{Yotam Gafni$^1$, Xin Huang$^1$, Ron Lavi$^{1,2}$ \& Inbal Talgam-Cohen$^1$ \\ 
$^1$ Technion - Israel Institute of Technology \\
$^2$ University of Bath, UK \\
\{yotam.gafni@campus., xinhuang@campus., ronlavi@ie., italgam@cs.\}technion.ac.il
}
\begin{document}

\maketitle
\thispagestyle{plain}
\pagestyle{plain}


\begin{abstract}
We consider fair allocation of indivisible items in a model with goods, chores, and copies, as a unified framework for studying: (1)~the existence of EFX and other solution concepts for goods with copies; (2)~the existence of EFX and other solution concepts for chores. 
We 
establish 
a tight relation between these issues via two conceptual contributions: First, a refinement of envy-based fairness notions that we term envy \emph{without commons} (denoted $\efxwc$ when applied to EFX). Second, a formal \emph{duality theorem} relating the existence of a host of (refined) fair allocation concepts for copies to their existence for chores.
We demonstrate the usefulness of our duality result by using it to characterize the existence of EFX for chores through the dual environment, as well as to prove EFX existence in the special case of leveled preferences over the chores. 
We further study the hierarchy among envy-freeness notions without commons and their $\alpha$-MMS guarantees, showing for example that any $\efxwc$ allocation guarantees at least $\frac{4}{11}$-MMS for goods with copies.
\end{abstract}

\section{Introduction}

This work studies the relation between the existence of a fair allocation of indivisible items among agents -- for various notions of fairness -- in two different models that at first blush do not seem tightly related. 
The first model has indivisible goods with multiple {\em copies} of each good. The second model has indivisible {\em chores} (or ``bads''), which are items with negative values.

\noindent {\bf Goods with copies.} While the seminal work of Budish~\shortcite{budish2011combinatorial} on allocation of indivisible items includes copies as a key ingredient (motivated by important applications like the allocation of university courses to students), 
most subsequent works focus on a single copy of each item \cite[see, e.g.,][]{caragiannis2019unreasonable}. We consider a model where each agent may receive at most one copy of each good; we term such allocations ``exclusive''.%
\footnote{Such allocations are termed ``valid'' in \cite{PTAS_MMS}, or ``at most one'' allocations in \cite{AMO}.} 
This captures applications like course allocation, where a student cannot be allocated multiple seats in the same course. Another example is digital goods with a license quota (e.g., a university with X software licenses that allocates at most one license per student).
Interestingly, even if a ``fair'' allocation (for an appropriate definition of fairness) is guaranteed to hold when each good has a single copy, this may significantly change when multiple copies are available. Considering fair allocation of goods with copies thus raises many open questions.

\noindent{\bf Chores.} 
Fair allocation of chores is a topic of much recent research \cite[e.g.,][]{huang2019choresApproximation}.
The literature usually treats chores separately from goods~\cite[e.g.,][]{IdenticalOrdinalPref,chaudhury2020efx,garg2020improvedMMSgoods}, and our current understanding of chores is lacking. 
For example, while it is known that an EFX allocation%
\footnote{In an EFX (resp.~EF1) allocation, the envy among any two agents can be eliminated by removing any (resp.~some) good from the envied bundle -- see Section~\ref{sec-pre} for formal fairness definitions.}
always exists for any number of goods (without copies) and three agents \cite{chaudhury2020efx}, almost nothing is known regarding EFX allocations of chores, and this is a major open problem. While this difference is a knowledge gap, other gaps provably hold. E.g., for goods, every allocation with \emph{MNW (maximum Nash welfare)} satisfies the ``envy bounded by a single item'' (EF1) fairness notion as well as Pareto efficiency \cite{caragiannis2019unreasonable}. In contrast, for chores, there is no single valued welfare function whose maximization implies both fairness and efficiency \cite{MixedManna}.%
\footnote{This phenomenon persists in the \emph{divisible} case, where there is a polytime algorithm to find a competitive division for goods, but the same problem is PPAD-hard for chores \cite{DBLP:journals/corr/abs-2008-00285}.}


\noindent {\bf Relating the models.} Given the above, one may conclude that studying existence of solution concepts for fair allocation of goods with copies and of chores is to be achieved by two disparate lines of research.
However, a main contribution of this paper is to formalize a tight relation among the models, via a duality theorem that we develop for a large class of fairness notions. The connection between goods and chores has been informally alluded to in the past by \citet{MixedManna}, who give a nice intuitive description: ``Say that we must allocate 5 hours of a painful job [...] Working 2 hours [on the job] is the same as being exempt [from the job] for 3 hours''. Very recently, \citet{PTAS_MMS} employed this connection in their proofs. 
\emph{Our goal in this paper is to develop a formal treatment of this duality in the context of fair allocation existence, in order to enable a comprehensive exploration of the fundamental connection between fair allocations of goods, chores, and their copies.} 


\subsection{Our Results}
We now describe our duality results via a main motivating application. \unless\ifdefined\full{} Throughout, missing proofs appear in the supplementary material.
\fi 

\noindent
\textbf{Main application.}
Consider the well-studied fairness notion of EFX, in which each agent~$i$ prefers her own bundle over the bundle of any other agent~$j$, if one arbitrary good is removed from $j$'s bundle (in the related EF1 notion, the most valued good is removed from $j$'s bundle). As standard in the literature, we assume additive values/costs, and study exclusive allocations.
%
In light of the recent progress on existence of EFX for goods, culminating with the result of \cite{chaudhury2020efx} for three agents, a natural question is whether EFX existence persists in our first model of interest -- goods \emph{with copies}.
We show (in Section~\ref{sec-warmup-efx}) that an EFX allocation does not always exist for goods with multiple copies. This negative result holds even in a simple setting with any number $n\geq 3$ of agents and identical values (see Example~\ref{ex:GeneralCopiesNoEFX}), demonstrating how adding copies can completely change the fairness landscape.  

To circumvent the underlying reason for inexistence with copies, we introduce a new fairness notion that we term ``EFX without commons'', or ``$\efxwc$'' for short.  
$\efxwc$ requires each agent~$i$ to prefer her own bundle over the bundle of any other agent~$j$, if an arbitrary good \emph{which is not also in $i$'s bundle} is removed from $j$'s bundle. Intuitively, an agent compares herself to her peers while putting aside the intersection of their bundles, i.e., goods common to both.
Where does $\efxwc$ fit in with other solution concepts?
In the special case of a single copy per good, $\efxwc$ coincides with EFX. With multiple copies, $\efxwc$ is a strictly weaker fairness requirement than EFX (i.e., $\efxwc$ is implied by EFX but not vice versa). 
On the other hand, $\efxwc$ is a strictly stronger fairness requirement than EF1.
This immediately raises the question of whether an $\efxwc$ allocation (unlike EFX) is guaranteed to exist for goods with copies.

At this point, the second model we study -- fair allocation of chores -- becomes relevant. We show that in any fair allocation setting where $n$ is the number of agents, 
an $\efxwc$ allocation for goods with $n-1$ copies per good exists if and only if an EFX allocation for chores 
exists when the valuations are treated as costs (see Corollary~\ref{cor-efx-for-chores-iff-efxwc}). Thus, we have established a new characterization for the important open problem of existence of EFX for chores. 
This characterization ``justifies'' our new $\efxwc$ notion, since its existence 
determines that of 
the well-studied EFX notion for chores. 


\noindent
\textbf{Duality theorems.}
The characterization result for EFX with chores 
is obtained as a corollary of our duality ``meta-theorem'' (Theorem~\ref{thm:wc-duality} in Section~\ref{sec-dual}), connecting goods with copies on the one hand, and chores with copies on the other. This meta-theorem shows, for example, that any allocation of goods with copies is $\efxwc$ if and only if its \emph{dual} allocation for chores with copies is $\efxwc$. 
Since the dual allocation for goods with $n-1$ copies is an allocation with one copy per chore, the characterization result follows. 
The meta-theorem is 
formulated in a general way that fits other envy-based solution concepts as well, and a similar duality meta-theorem (Theorem~\ref{thm-meta-share}) applies to \emph{share-based} solution concepts.%
\footnote{In share-based fairness, an agent compares her allocation to her ``fair share'' of the whole rather than to her peers'; see Section~\ref{sec-pre}.} 
The duality meta-theorems are in fact flexible enough to apply to settings with a mixture of goods and chores (see Remark \ref{rem:mixture}).

\noindent
\textbf{Fairness notions ``without commons''.}
In Section~\ref{sec:notions_hierarchy} we further explore ``without commons'' fairness, since its connection to EFX for chores seems to indicate it is the ``right'' notion to study for copies, and since disregarding common items is a natural way to limit envy among agents. We define similar notions to $\efxwc$ for other solution concepts beyond EFX, in particular $\efowc$ and $\eflwc$.%
\footnote{EFL (the basis for $\eflwc$) is due to \cite{barman2018groupwise}.}
We study how these new concepts expand the hierarchy of envy-based fairness notions. Expansion of the hierarchy is important for two reasons: First, intermediate ``without commons'' fairness notions can serve as a technical stepping stone for making progress on the elusive existence of more standard notions. Second, as we show, ``without commons'' envy-freeness has good share-based fairness guarantees,%
\footnote{As measured by their approximation of MMS, the maximin share of every agent \cite{budish2011combinatorial}.}  
especially in comparison to standard envy-freeness concepts like EF1. 
We thus believe such concepts to be of independent interest. 
Our results in this section are summarized in Figure~\ref{fig:notions_hierarchy}. 

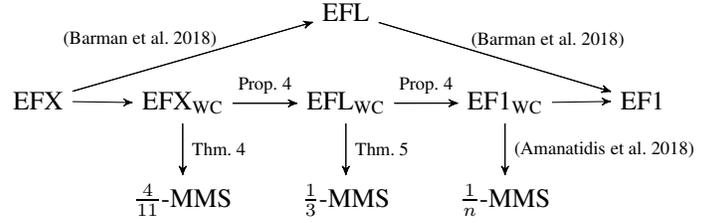
\begin{figure}
\begin{tikzpicture}
  \matrix (m) [matrix of math nodes, row sep=2em, column sep=2em]
    {  &   & \text{EFL}  &   &  \\
      \text{EFX} & \efxwc & \eflwc & \efowc & \text{EF1} \\ 
      & \frac{4}{11}\text{-MMS} &\frac{1}{3}\text{-MMS} & \frac{1}{n}\text{-MMS} & \\};
  { [start chain] \chainin (m-2-1);
    \chainin (m-2-2);
    { [start branch=A] \chainin (m-3-2)
        [join={node[right,labeled] {\text{Thm.~\ref{thm:efx-wc_separation}}}}];}
    \chainin (m-2-3) [join={node[above,labeled] {\text{Prop.~\ref{prop:inclusive_efxwc_eflwc}}}}];
    { [start branch=B] \chainin (m-3-3)
        [join={node[right,labeled] {\text{Thm.~\ref{thm:efl-wc_separation}}}}];}
    \chainin (m-2-4) [join={node[above,labeled] {\text{Prop.~\ref{prop:inclusive_efxwc_eflwc}}}}];
    { [start branch=C] \chainin (m-3-4)
        [join={node[right,labeled] {\text{(Amanatidis et al.~2018)}}}];}
    \chainin (m-2-5);
    }
  { [start chain] 
    \chainin (m-2-1);
    \chainin (m-1-3) [join={node[above,labeled] {\hspace{-40pt}\text{\cite{barman2018groupwise}}}}];
    \chainin (m-2-5) [join={node[above,labeled] {\hspace{40pt}\text{\cite{barman2018groupwise}}}}];
    }
\end{tikzpicture}
\caption{The hierarchy among fairness notions for goods with copies, as established in Section~\ref{sec:notions_hierarchy}. A horizontal arrow $A\rightarrow B$ means that an allocation satisfying $A$ also satisfies $B$, but not vice versa. A vertical arrow $A\rightarrow \alpha$-MMS means that an allocation satisfying $A$ guarantees $\alpha$-MMS (in the case of $\efowc$ the guarantee is \emph{at most} $\frac{1}{n}$-MMS).} 
\label{fig:notions_hierarchy}
\end{figure}

Returning to our main application, while the general question of $\efxwc$ existence for goods with copies is as hard as EFX for chores and thus beyond our current scope, Section~\ref{sec:leveled} provides an affirmative answer for the special case of \emph{leveled} preferences \cite{Babaioff2017CompetitiveEW,ManjunathW21}, where larger bundles are always preferred to smaller ones. This immediately implies the existence of EFX for chores with leveled preferences, thus demonstrating the usefulness of our duality results.

\subsection{Additional Related Literature}

\textbf{Solution concepts.} 
An EF1 allocation is guaranteed to exist for goods without copies \cite{lipton2004approximately,caragiannis2019unreasonable}, for goods with copies~\cite{cardinalityConstraints2018}, and for chores \cite{mixGoodsChores,EnvyCycleCancellationExample}. Envy-freeness up to any item (EFX) 
existence is still a wide open problem for $n\geq4$ agents and additive valuations.
An MMS allocation does not always exist, as was observed by Kurokawa \emph{et al}.~\shortcite{alphaMMS}, but fractional approximations of it may be guaranteed, prompting the notion of $\alpha$-MMS. Their work guarantees existence of $\frac{2}{3}$-MMS, and further works guarantee $\frac{3}{4}$-MMS or more for goods, depending on the number of agents \cite{ghodsi2018BestMMSgoods,garg2020improvedMMSgoods}. For chores, Aziz \emph{et al}.~\shortcite{aziz2017MMSChores} and Huang and Lu~\shortcite{huang2019choresApproximation} establish $\frac{11}{9}$-MMS. The envy-free concept of ``EFL'', which we find useful as an intermediary to argue about EFX,  was introduced by Barman \emph{et al}.~\shortcite{barman2018groupwise}, who show its existence for additive goods. For goods, envy-freeness notions can be sorted into a hierarchy (namely, $\text{EFX} \implies \text{EFL} \implies \text{EF1}$), with the notions separated by their $\alpha$-MMS guarantees, as shown by Amanatidis \emph{et al}.~\shortcite{comparingNotions2018}. We are able to establish a similar hierarchy for our more general goods with copies model. 


\noindent
\textbf{Restricted allocations.} 
Kroer and Peysakhovich \shortcite{AMO} consider optimization of MNW by exclusive (``at most one'') allocations, showing these achieve an approximate CEEI in large enough instances. 
Biswas and Barman~\shortcite{cardinalityConstraints2018} consider a model
where each agent has cardinality constraints over the amount of a same-typed good she can be allocated. Our model is a special case and thus their general results hold, namely existence of EF1 allocations and of $\frac{1}{3}$-MMS allocations. EFX and EFL allocations may not exist for certain instances in their model as well as in ours. We refine these notions by the introduction of ``without commons'' which enables our duality results, the hierarchy of concepts we describe, and our existence results. Other models with matroid constraints over allocations are presented in \cite{matroidConstraints2019,HeterogenousMatroidConstraints2020}. 


\noindent
\textbf{Relation between goods and chores.}
Mixed instances of goods and chores are considered by Aziz \emph{et al}.~\shortcite{mixGoodsChores}. As we show, a mixed model of goods and chores is a special case of our goods with copies model.
Kulkarni \emph{et al.} \shortcite{PTAS_MMS} note that a chore can be reinterpreted as $n-1$ goods, and use this as a technical tool in their PTAS for computing the MMS guarantee in mixed manna settings. 
They do not explore further implications of this duality.




    

\section{Preliminaries}\label{sec-pre}

In this section we present our model, provide definitions for leading measures of fair allocation of goods, and explain how they generalize to chores.

\noindent
\textbf{Our model.}
Let $\agents$ be a set of $n$ agents and let $\items$ be the set of items to allocate to the agents. 
Let $\types$ be a set of item {\em types} to allocate to the agents
(we slightly abuse notation and use~$\types$ to also denote an item set with a single copy per item). 
The item set $\items$ is a multiset of the type set $\types$, and includes $k_t \le n$ copies of each item type $t\in\types$.

Every agent $i\in\agents$ has a \emph{valuation} function $v_i :2^{\items}\rightarrow \mathbb{R}$ mapping bundles of values to their values. 
An item $t \in \items$ is termed a \emph{good} if $\vai{t}\ge 0$ for each agent~$i$ and $>0$ for at least one agent, and a \emph{chore} if $\vai{t}\le 0$ for each agent~$i$. 
We assume that each item is either a good or a chore, i.e., we do not allow an item to be a good for one agent and a chore for another. We use $t$ when referring to a general item, $g$ when referring to a good, and $c$ when referring to a chore.
Throughout we assume \emph{additive} valuations, i.e., for every bundle of items $S \subseteq \items, v(S) = \sum_{t \in S} v(t)$. Note that additive valuations are \emph{monotone}, i.e., for every two bundles $S_1 \subseteq S_2 \subseteq \items$, we have $v(S_1) \leq v(S_2)$ for goods and 
$v(S_1) \geq v(S_2)$ for chores.

An \emph{allocation} $\allocs=\alloclist$ is a partition of all items among the agents (no item can be left unallocated). We focus on allocations that satisfy the following constraint:

\begin{definition}
An allocation $\allocs$ is {\bf exclusive} if no two copies of the same item are allocated to the same agent.
\end{definition}
(That is, by ``allocation'' we always refer to an exclusive allocation.) We denote the set of all (exclusive) allocations of items $\items$ by $\ea{\items}$. 


\begin{definition}
Consider exclusive allocations $\allocs,\allocs'\in\ea{\items}$. Then $\allocs'$ \emph{\bf Pareto dominates} $\allocs$ if for every agent $i$, $\vai{\alloci'}\ge\vai{\alloci}$ and $\vai[j]{\alloci[j]'}>\vai[j]{\alloci[j]}$ for at least one agent $j$.
Allocation $\allocs$ is {\bf Pareto optimal} if there is no $\allocs'$ that Pareto dominates it. 
\end{definition}

\paragraph{Fairness notions for goods and chores.}
The next definition summarizes many prominent fairness notions studied in previous literature, as defined for goods.


\begin{definition}\label{def-fair-normal}
An allocation $\allocs\in \ea{\items}$ of \emph{goods} is: 
\begin{itemize}
    \item \emph{\bf EFX} if $\forall i,j\in\agents, \forall g\in\alloci[j] : \vai{\alloci}\ge\vai{\alloci[j]\setminus \{g\}}$;

 \item \emph{\bf EF1} if $\forall\ i,j\in\agents, \exists g\in\alloci[j] : \vai{\alloci}\ge\vai{\alloci[j]\setminus \{g\}}$;
 \footnote{In case $\alloci[j]$ is empty, the good exists ``in an empty sense'' and the condition is satisfied.}
     
 \item \emph{\bf EFL} if $\forall\ i,j\in\agents$, either (1) $|A_j| \leq 1,$ or
 (2) $\exists g\in\alloci[j] : \vai{\alloci}\ge\max\{\vai{\alloci[j]\setminus \{g\}},\vai{g}\}$. In words, \emph{EFL} strengthens the \emph{EF1} requirement by not allowing to remove from $A_j$ a highly valuable good, unless $A_j$ is a singleton;

    \item \emph{\bf PROP} (proportional) if $\forall\ i \in \agents$ : $\vai{\alloci}\ge\frac{1}{n}\vai{\items}$;
    
    \item \emph{\bf $\alpha$-MMS} (maximin share) for some $0 < \alpha \leq 1$ if
    $\forall i \in \agents$:
    $$\vai{\alloci}\ge \alpha  \max_{\allocs\in\ea{\items}}\min_{j\in\agents}\vai{\alloci[j]}.$$
    \emph{1-\mms} is termed \emph{\mms}.
\end{itemize}
\end{definition}

We can distinguish between two classes of fairness notions appearing in Definition~\ref{def:share-fair}: \emph{share-based} and \emph{envy-based}. Interestingly, this classification determines how the fairness notions are generalized to chores. 

First, PROP and MMS are share-based, meaning that an agent's allocation is measured against what is deemed her fair share:

\begin{definition}
\label{def:share-fair}
Given a valuation $v$, the number of agents $n$ and 
an item set $\items$, a {\bf share} function $s$ outputs a real value, i.e., $s(v,\items, n)\in \mathbb{R}$. An exclusive allocation $\allocs$ is called {\bf $s$-share fair} if we have $\vai{\alloci}\ge s(\vau[i],\items, n), \forall i\in\agents$. 
\end{definition}

\noindent
For example, for PROP, $s(\vau[i],\items, n)=\frac{1}{n}v_i(\items)$. The definitions of PROP and MMS above hold for chores too, and in fact for any mixture of goods and chores. 

The notions of EFX, EF1 and EFL are envy-based, meaning that an agent's allocation is measured against those of her peers. The definition of envy-based notions for chores is different than their definition for goods in that, when an agent $i$ compares her bundle to the bundle of another agent~$j$, she removes a single chore from her \emph{own} bundle (while a good is removed from the other agent's bundle). For example, an exclusive allocation $\allocs$ for chores is EFX if for every agent pair $i,j\in\agents$ and chore $c\in \alloci$:
\begin{equation}
\vai{\alloci\setminus \{c\}}\ge\vai{\alloci[j]}.\label{eq:EFX-chores}
\end{equation}
To more generally phrase the connection between envy-based fairness for goods and for chores we need the following definition.

\begin{definition}[Comparison criterion for goods and chores]
\label{def:comp-criterion}
~~
\begin{itemize}

    \item Given a valuation function $\vau[]$ and two bundles $\bundlei[I]$ and $\bundlei[U]$, a {\bf comparison criterion} $f$ outputs $F$ for ``fair'' and $\neg F$ for ``unfair'', i.e., $f(\vau[],\bundlei[I],\bundlei[U]) \in \{F,\neg F\}$.
    
    \item Given a comparison criterion $f$, the {\bf complementary} comparison criterion $f^c$ satisfies $f^c(\vau[],\bundlei[I],\bundlei[U])=f(-\vau[],\bundlei[U],\bundlei[I])$.
    
\end{itemize}
\end{definition}

For example, let $f$ be the comparison criterion corresponding to EFX for goods (see Definition~\ref{def-fair-normal}), then $f^c$ coincides with the comparison criterion of EFX for chores in Eq.~\eqref{eq:EFX-chores}.
%
%
%
One can similarly define EF1 and EFL for chores. More generally: 

\begin{definition}[Envy-based fairness for goods and chores]
\label{def-envy-based}
~~
\begin{itemize}

    \item Given a comparison criterion $f$, we say an allocation $\allocs$ is $f$-{\bf fair} iff $\forall i,j\in\agents: f(\vau,\alloci[i],\alloci[j])=F$.

    \item A solution concept \emph{E} is {\bf envy-based} if there exists a comparison criterion $f$ such that an allocation $\allocs$ for goods satisfies \emph{E} iff $\allocs$ is $f$-fair, and an allocation $\allocs$ for chores satisfies \emph{E} iff $\allocs$ is $f^c$-fair.

\end{itemize}
\end{definition}


\begin{remark}
\label{rem:mixture}
We define envy-based fair allocation solution concepts for the case where all items are goods and for the case where all items are chores. Our definition -- and in fact all results in this paper -- can be extended to a setting of additive valuations where some items are goods while others are chores, using the envy-based fair allocation solution concepts of \citet{mixGoodsChores}. 
\end{remark}






\section{EFX for Goods with Copies}
\label{sec-warmup-efx}

We begin by showing that EFX may not exist with copies.


\begin{example}
\label{ex:GeneralCopiesNoEFX}
Consider $n\geq 3$ agents, $n+1$ goods, and $\frac{n}{2}<k<n$ copies of each good. All agents share the same additive valuation function~$v$, and for every item type $g_w\in \items$: 
$$
v(g_w) = \begin{cases} 
w & w < n+1; \\
n^2 & w = n+1.
\end{cases}
$$
\end{example}

\begin{prop}
\label{pro:GeneralCopiesNoEFX}
For goods with copies, an \emph{$\efx$} allocation may not exist even for 3 agents. The same holds for~\emph{$\efl$}.
\end{prop}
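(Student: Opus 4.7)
The plan is to exploit the gap between the value $n^2$ of the unique high-value type $g_{n+1}$ and the total value $\sum_{w=1}^{n} w = n(n+1)/2 < n^2$ of all other types combined, and to combine this with a counting argument forced by exclusivity.

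First, I would observe that since there are only $k < n$ copies of $g_{n+1}$, at least one agent $i$ receives no copy of $g_{n+1}$ in any allocation. Because the allocation is exclusive, $i$ can hold at most one copy of each other type, so $v(\alloci) \le \sum_{w=1}^n w = \frac{n(n+1)}{2}$, which is strictly less than $n^2$ for $n \ge 3$.

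Next, I would show that for EFX (and also for EFL), every agent $j$ that holds a copy of $g_{n+1}$ must in fact hold the singleton bundle $\alloci[j] = \{g_{n+1}\}$. Indeed, if $|\alloci[j]| \geq 2$, then removing any single good $g$ from $\alloci[j]$ still leaves a bundle of value at least $n^2$: either $g = g_{n+1}$ is removed, in which case the EFL term $v(g_{n+1}) = n^2$ already exceeds $v(\alloci)$; or $g \ne g_{n+1}$, in which case $v(\alloci[j]\setminus\{g\}) \ge v(g_{n+1}) = n^2 > v(\alloci)$. Either way agent~$i$'s EFX/EFL comparison with $j$ fails.

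Finally, I would derive a contradiction by counting. The condition above forces the $k$ agents holding $g_{n+1}$ to each hold only that singleton, so all $nk$ remaining items (the $k$ copies of each of $g_1, \dots, g_n$) must be distributed among the $n-k$ agents who lack $g_{n+1}$. By exclusivity each such agent can hold at most $n$ items (one per type), giving a total capacity of $n(n-k)$. Feasibility therefore requires $nk \le n(n-k)$, i.e.\ $k \le n/2$, contradicting the hypothesis $k > n/2$. Hence no EFX (and no EFL) allocation exists. The only step requiring some care is verifying the EFL case, since EFL allows the agent to choose the good to remove; but the two-case split on whether that good is $g_{n+1}$ or not handles it uniformly, so I do not expect a serious obstacle beyond bookkeeping.
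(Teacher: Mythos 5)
Your proof is correct, but it takes a genuinely different route from the paper's. The paper applies a pigeonhole argument directly to the copies: since there are $k$ copies of $g_1$ and $k$ copies of $g_{n+1}$ with $2k>n$, some agent $h$ holds both, and any agent $l$ lacking $g_{n+1}$ then EFL-envies $h$, because any good cheap enough to be removable under EFL is not $g_{n+1}$, so the residual bundle still contains $g_{n+1}$ with $v(g_{n+1})=n^2>\frac{n(n+1)}{2}\ge v(A_l)$. You instead prove a structural consequence of EFX/EFL --- every holder of $g_{n+1}$ must hold it as a singleton --- and then derive a contradiction from a capacity count: the $nk$ remaining copies cannot fit into the $n(n-k)$ slots that exclusivity leaves once $k>n/2$. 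Both arguments hinge on exclusivity and on $2k>n$, but they use that inequality for different purposes (intersection of two $k$-element sets of agents versus $nk>n(n-k)$). The paper's version is shorter and exhibits an explicit envious pair; yours isolates the mechanism more cleanly (the high-value good forces singletons, and there is then not enough room for everything else), at the cost of an extra feasibility step. The one delicate point --- that in the EFL case the removed good cannot be $g_{n+1}$ itself, since its value alone already exceeds the envious agent's bundle value --- is handled correctly in your case split and is exactly the observation the paper's inequality chain relies on as well.
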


\begin{proof}
Consider Example~\ref{ex:GeneralCopiesNoEFX}. We show non-existence of EFL allocations. Since every EFX allocation is also EFL \cite{barman2018groupwise}, this shows no EFX allocations exist for these settings. Fix an agent $l$ who does not receive a copy of $g_{n+1}$. 
Since there are $k$ copies of $g_1$ and $k$ copies of $g_{n+1}$ where $2k>n$, there exists an agent $h$ which receives both $g_1$ and $g_{n+1}$. Agent $l$ EFL-envies agent $h$ since the first condition of EFL is not satisfied ($|\alloci[h]| \geq 2$), and for any $g \in \alloci[h]$ with $\vai[l]{g} \leq \vai[l]{\alloci[l]}$,
$$
\vai[l]{\alloci[l]} \le \sum_{w\le n}\vai[l]{g_{w}} < \vai[l]{g_{n+1}} \le \vai[l]{\alloci[h]\setminus\{g\}},
$$
violating the second condition of EFL. 
\end{proof}




To handle the non-existence problem we introduce the new fairness notion of $\efxwc$ -- a generalization of EFX to a model with copies. When comparing two bundles, $\efxwc$ puts aside all items common to both bundles, and then compares the remaining items in an EFX way:


\begin{definition}[$\efxwc$ for goods]
\label{def-efxwc}
An exclusive allocation $\allocs$ of goods is \emph{$\efxwc$} if for every two agents $i,j\in\agents$ and item $g\in \alloci[j]\setminus\alloci$ it holds that
$
\vai{\alloci\setminus\alloci[j]}\ge\vai{(\alloci[j]\setminus\alloci)\setminus \{g\}}.
$
\end{definition}

\noindent
\begin{remark}
We remark that:
~~
\begin{itemize}
    \item \emph{$\efxwc$} is an envy-based notion and therefore its definition extends to chores using Definition~\ref{def-envy-based}.

    \item A model with no copies is a special case of our model, and when there is one copy of each good in our model then \emph{$\efxwc$} is equivalent to \emph{EFX.}

    \item \emph{EFX} implies \emph{$\efxwc$}: By additivity of the values, the inequality in Def.~\ref{def-efxwc} is equivalent to the EFX inequality $\vai{\alloci}\ge\vai{\alloci[j]\setminus \{g\}}$, but \emph{EFX} allows $g$ to be any item while \emph{$\efxwc$} restricts its choice.

\end{itemize}
\end{remark}

A main technical justification of $\efxwc$ is its ability to give a ``dual'' view of goods and chores, which also yields a new characterization of existence of standard EFX allocations for chores. This is summarized by the following result (it is a corollary of Theorem~\ref{thm:wc-duality} stated in the next section). 

\begin{corollary}[Characterization of EFX existence for chores]
\label{cor-efx-for-chores-iff-efxwc}
An \emph{EFX} allocation for chores exists in the standard setting without copies iff an \emph{$\efxwc$} allocation for goods exists in our setting with $k=n-1$ copies of each good.
\end{corollary}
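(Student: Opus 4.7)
The plan is to obtain this corollary as a direct application of the duality meta-theorem (Theorem~\ref{thm:wc-duality}). First I would set up the dual correspondence between instances: in a goods instance with $n$ agents and $k=n-1$ copies of each type $t$, any exclusive allocation leaves exactly one agent without a copy of $t$, since each agent can hold at most one copy and only $n-1$ copies exist. The natural dual is the chores instance obtained by turning each type into a single chore held by the unique missing agent, with valuations negated. This yields a chores instance with one copy per type --- precisely the standard (no-copies) chores model appearing in the corollary statement.

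Next I would invoke Theorem~\ref{thm:wc-duality}: an allocation $\allocs$ of the goods instance is $\efxwc$ if and only if its dual $\dual{\allocs}$ of the chores instance is $\efxwc$. This gives a bijection between $\efxwc$ allocations in the two settings, so existence on one side is equivalent to existence on the other.

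Finally I would bridge $\efxwc$ and standard \emph{EFX} for the chore side. Since the dual chores instance has a single copy of each chore, for any two agents $i,j$ the intersection $\dual{\alloci} \cap \dual{\alloci[j]}$ is empty. Consequently the ``without commons'' qualifier in Definition~\ref{def-efxwc} (as extended to chores via Definition~\ref{def-envy-based}) is vacuous, and $\efxwc$ coincides with \emph{EFX} for chores. Chaining these two equivalences yields the claimed iff.

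The main obstacle lies not in the logical structure of the argument --- a clean two-step reduction --- but in verifying that the missing-copy construction is indeed the dualization prescribed by Theorem~\ref{thm:wc-duality}, with the appropriate sign flip of valuations so that the $\efxwc$ comparison criterion is preserved between primal and dual. Once that match is confirmed, the corollary follows without further calculation.
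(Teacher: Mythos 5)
Your proposal is correct and follows essentially the same route as the paper: apply Theorem~\ref{thm:wc-duality} to the dual pair (goods with $n-1$ copies) $\leftrightarrow$ (chores with one copy), then observe that with a single copy per chore the bundles of distinct agents are disjoint, so $\efxwc$ coincides with standard EFX on the chore side. The paper presents exactly this two-step argument immediately after Theorem~\ref{thm:wc-duality}, so no further commentary is needed.
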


The following example illustrates the above.
\begin{example}
[Special case of Example~\ref{ex:GeneralCopiesNoEFX}]
\label{ex-dual}
There are $n=3$ agents and 4 goods with $n-1=2$ copies each. All agents have valuation $v$ as defined in Example \ref{ex:GeneralCopiesNoEFX} (the values of items $t_1,t_2,t_3,t_4$ are $1,2,3,9$ respectively).
\end{example}

In Example~\ref{ex-dual}, the allocation $\alloci[1]=\{t_1,t_2,t_4\}$, $\alloci[2]=\{t_3,t_4\}$ and  $\alloci[3]=\{t_1,t_2,t_3\}$ is $\efxwc$. Indeed, agent~3 EFX-envies the others (and moreover EFL-envies them), 
but after putting aside common items, only item $t_4$ is left for the other agents, implying $\efxwc$.
We now show that the ``dual'' allocation is $\efx$ for chores. The dual allocation is given by $\dual{\alloci}=\types\setminus\alloci$ for every $i$ (see formal definition in Section~\ref{sec-dual}). I.e., $\dual{\alloci[1]}=\{t_3\}$, $\dual{\alloci[2]}=\{t_1,t_2\}$ and $\dual{\alloci[3]}=\{t_4\}$. When the goods are treated as chores by simply taking the negation $(-v)$ of the valuation (such that the values of $t_1,t_2,t_3,t_4$ are $-1,-2,-3,-9$ respectively), 
it is not hard to check that the dual allocation is EFX for these chores. Indeed, agent~3 envies the others, but removing any chore from agent~3's single-chore bundle alleviates the envy. 

\section{Fairness Duality for Copies and Chores}\label{sec-dual}

In this section we show that allocations of goods and of chores are ``dual'' in a formal sense, and more importantly that fairness notions translate between dual allocations. 

\begin{definition}[Duality]
The {\bf dual} of a tuple $(\allocs,v,\items)$ is:
\begin{itemize}
    \item The dual allocation $\dual{\allocs}$ is $\dual{\alloci}=\types\setminus\alloci$.   
    
    \item The dual valuation 
    is $\dual{v}=-v$. Thus goods become chores and vice versa. 

    \item The dual item set $\dual{\items}$ 
    contains $n-k_t$ copies of every $t \in \types$, where $k_t$ is the number of copies of $t$ in $\items$.

\end{itemize}

\end{definition}


\begin{prop}[Properties of dual transformations]
\label{prop:duality}
~~
\begin{enumerate}
    \item If $\allocs$ is an exclusive allocation of item set $\items$ then $\dual{\allocs}$ is an exclusive allocation of item set $\dual{\items}$.
    
    \item The dual of the dual is the original: $\dual{\dual{\allocs}}=\allocs$, $\dual{\dual{v}}=v$ and $\dual{\dual{\items}}=\items$.
    
    \item The dual operation is a one-to-one mapping.
\end{enumerate}
\end{prop}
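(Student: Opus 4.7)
The plan is to verify the three properties in order, exploiting the fact that each agent's bundle, being exclusive, can be identified with a subset of $\types$ (rather than a multiset of $\items$), so the dual operation on bundles reduces to set complementation within $\types$.

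For part~1, I would first observe that since $\allocs$ is exclusive, each $\alloci$ is a subset of $\types$, and therefore $\dual{\alloci} = \types \setminus \alloci$ is also a subset of $\types$, so no two copies of any type are assigned to agent~$i$ under $\dual{\allocs}$. To confirm that $\dual{\allocs}$ is actually an allocation of the multiset $\dual{\items}$, I would count, for each type $t \in \types$, the number of copies that appear in $\dual{\allocs}$: this is $|\{i \in \agents : t \in \dual{\alloci}\}| = |\{i \in \agents : t \notin \alloci\}| = n - k_t$, which matches the number of copies of $t$ in $\dual{\items}$ by definition. Finally, disjointness (no item assigned twice) combined with the correct total count shows $\dual{\allocs}$ partitions $\dual{\items}$.

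For part~2, the three claims are independent. The allocation identity follows from double complementation: $\dual{\dual{\alloci}} = \types \setminus (\types \setminus \alloci) = \alloci$ for every $i$, where the last equality uses that $\alloci \subseteq \types$ (guaranteed by exclusivity). The valuation identity $\dual{\dual{v}} = -(-v) = v$ is immediate. For the item set, $\dual{\dual{\items}}$ contains $n - (n - k_t) = k_t$ copies of each type $t$, which equals $\items$.

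For part~3, one-to-one of the dual operation on tuples $(\allocs, v, \items)$ follows directly from part~2: since $\dual{(\cdot)}$ is its own inverse, it is in particular injective. Concretely, if $\dual{(\allocs, v, \items)} = \dual{(\allocs', v', \items')}$, applying $\dual{(\cdot)}$ to both sides and using part~2 gives $(\allocs, v, \items) = (\allocs', v', \items')$.

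The main obstacle, such as it is, is being careful about the distinction between the type set $\types$ and the multiset $\items$: the operation $\types \setminus \alloci$ is set complementation in $\types$, which is well defined precisely because exclusivity lets us identify $\alloci$ with a subset of $\types$. Once that identification is made explicit, every step reduces to elementary set algebra and sign flipping, and no calculation beyond the copy count $n - k_t$ is required.
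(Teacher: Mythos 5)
Your proof is correct and follows essentially the same route as the paper's: counting that each type $t$ appears in exactly $n-k_t$ dual bundles for part~1, double complementation and sign flipping for part~2, and deducing injectivity from the involution property for part~3. The only difference is presentational — you make the identification of exclusive bundles with subsets of $\types$ more explicit, which the paper leaves implicit.
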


\begin{proof}
\begin{enumerate}
    \item $\dual{\allocs}$ is an allocation since each item $t$ is allocated $k_t$ times in $\allocs$, so it does not appear in the bundles of $n-k_t$ agents. Every such agent is allocated a copy of $t$ in $\dual{\allocs}$, and other agents are not.    $\dual{\allocs}$ is exclusive since by construction each bundle $\dual{\alloci}$ is contained in $\types$ and has at most one copy of each item.

    \item $\dual{\dual{\alloci[i]}} = \types \setminus \dual{\alloci[i]} = \types \setminus (\types \setminus \alloci[i]) = \alloci[i],\
    \dual{\dual{v}} = -\dual{v} = v$,
    and $\dual{\dual{\items}}$ contains $n-(n-k_t) = k_t$ items $\forall t \in \types$. 
    
    \item Assume that $\allocs_1, \allocs_2, v_1, v_2, \items_1, \items_2$ satisfy $\dual{\allocs_1} = \dual{\allocs_2}$, $\dual{v_1} = \dual{v_2}$, $\dual{\items_1} = \dual{\items_2}$, then we have $\allocs_1 = \dual{\dual{\allocs_1}} = \dual{\dual{\allocs_2}} = \allocs_2$. The rest follows similarly.
\end{enumerate}
\vspace{-4mm}
\end{proof}

\begin{theorem}
An exclusive allocation $\allocs$ is Pareto optimal iff the dual allocation $\dual{\allocs}$ is Pareto optimal. 
\end{theorem}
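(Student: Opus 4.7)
The plan is to reduce Pareto dominance between allocations directly to Pareto dominance between their duals via a pointwise utility identity. The key observation is that for every agent $i$ and every exclusive allocation $\allocs$, the bundles $\alloci$ and $\dual{\alloci} = \types \setminus \alloci$ partition the type set $\types$, so additivity yields $v_i(\alloci) + v_i(\dual{\alloci}) = v_i(\types)$. Since the dual valuation of agent $i$ is $-v_i$, this rearranges to
$$
-v_i(\dual{\alloci}) \;=\; v_i(\alloci) - v_i(\types),
$$
where the right-hand constant $v_i(\types)$ depends only on the agent, not on the allocation.

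First I would establish this identity carefully. Then, given two exclusive allocations $\allocs$ and $\allocs'$ of $\items$, subtracting the identity for $\allocs$ from that for $\allocs'$ gives
$$
\bigl(-v_i(\dual{\alloci}')\bigr) - \bigl(-v_i(\dual{\alloci})\bigr) \;=\; v_i(\alloci') - v_i(\alloci) \quad \text{for every } i \in \agents.
$$
Hence $\allocs'$ Pareto dominates $\allocs$ under $v$ (weak inequality for every agent, strict for at least one) if and only if $\dual{\allocs}'$ Pareto dominates $\dual{\allocs}$ under $-v$. The sign flip in the dual valuation and the set complement in the dual allocation precisely cancel, so utilities move in the \emph{same} direction on both sides of the duality.

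Finally I would conclude the biconditional by contrapositive: $\allocs$ fails to be Pareto optimal iff some exclusive $\allocs'$ dominates it, iff $\dual{\allocs}'$ dominates $\dual{\allocs}$, iff $\dual{\allocs}$ fails to be Pareto optimal. To make the last equivalence clean I would invoke Proposition~\ref{prop:duality}(1) to ensure $\dual{\allocs}'$ is a valid exclusive allocation of $\dual{\items}$, and Proposition~\ref{prop:duality}(3) (bijectivity of the dual operation) to ensure that \emph{every} exclusive allocation of $\dual{\items}$ arises as $\dual{\allocs}'$ for some exclusive $\allocs'$ of $\items$; this guarantees that quantifying over dominators on one side is equivalent to quantifying over them on the other.

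I do not anticipate a genuine obstacle: the argument is a one-line algebraic identity plus the bijectivity already packaged in Proposition~\ref{prop:duality}. The only point demanding attention is keeping the quantifiers aligned across the duality, which the involution property makes routine.
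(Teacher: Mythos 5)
Your proposal is correct and follows essentially the same route as the paper: both rest on the additive identity that $\dual{v_i}(\dual{\alloci})$ equals $v_i(\alloci)$ shifted by the agent-specific constant $-v_i(\types)$, so (strict and weak) utility comparisons between allocations are preserved under the dual map, and the bijectivity of the dual operation transfers the existence of a Pareto dominator. Your write-up is if anything slightly more careful than the paper's in making the quantifier alignment explicit via Proposition~\ref{prop:duality}.
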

\begin{proof}
First we prove a property of the dual: If $\vai{\bundlei}>\vai{\alloci}$ then $\dual{\vau[i]}(\dual{\bundlei})>\dual{\vau[i]}(\dual{\alloci})$. From $\vai{\bundlei}>\vai{\alloci}$, we have $\vai{\types\setminus\bundlei}<\vai{\types\setminus\alloci}$, i.e.  $\vai{\dual{\bundlei}}<\vai{\dual{\alloci}}$. Because $\dual{\vau[i]}=-\vau[i]$, we have $\dual{\vau[i]}(\dual{\bundlei})>\dual{\vau[i]}(\dual{\alloci})$.


Suppose that allocation $\bundles$ Pareto dominates allocation $\allocs$. We prove that allocation $\dual{\bundles}$ Pareto dominates allocation $\dual{\allocs}$. 
By the property we just proved, we have $\dual{\vau[i]}(\bundlei)\ge\dual{\vau[i]}(\alloci)$ for all $i\in\agents$ and there is a $j$ such that $\dual{\vau[j]}(\dual{\bundlei[j]})>\dual{\vau[j]}(\dual{\alloci[j]})$
This directly implies the statement, as there is a Pareto dominating allocation over $\allocs$ iff there is such an allocation for the dual. 
\end{proof}

\subsection{A Meta-theorem for Envy-based Notions}
\label{sub:meta-envy}

The idea of $\efxwc$ generalizes to other envy-based notions:

\begin{definition}[Comparison without commons]
\label{def-wc}
 Given a comparison criterion $f$, the comparison criterion \emph{$f_{\text{WC}}$} satisfies $f_{\text{\emph{WC}}}(\vau[],\bundlei[I],\bundlei[U])=f(\vau[],\bundlei[I]\setminus\bundlei[U],\bundlei[U]\setminus\bundlei[I])$, for all $\vau[],\bundlei[I],\bundlei[U]$.
\end{definition}

\noindent
As additional examples to $\efxwc$, consider $\efowc$ and $\eflwc$.
Note that by Definition~\ref{def:comp-criterion}, for any comparison criterion~$f$ we have $(f_{\text{WC}})^c = (f^c)_{\text{WC}} \equiv f^c_{\text{WC}}$.  
We now show our first main theorem -- that envy-based fairness holds under a duality transformation from goods to chores.

\begin{theorem}
\label{thm:wc-duality}
Given a comparison criterion $f$, an exclusive allocation $\allocs$ is $f_{\text{\emph{WC}}}$-fair with respect to $v$ iff its dual $\dual{\allocs}$ is $f_{\text{\emph{WC}}}^c$-fair with respect to $\dual{v}$.
\end{theorem}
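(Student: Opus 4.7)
The plan is to prove both directions simultaneously by showing that, for every pair of agents $i,j$, the value of the comparison criterion $f_{\text{WC}}^c$ applied to $(\dual{v},\dual{\alloci},\dual{\alloci[j]})$ is identical to the value of $f_{\text{WC}}$ applied to $(v,\alloci,\alloci[j])$. Once established, $f_{\text{WC}}$-fairness of $\allocs$ with respect to $v$ is literally the same collection of $F$-outputs as $f_{\text{WC}}^c$-fairness of $\dual{\allocs}$ with respect to $\dual{v}$, giving the ``iff'' for free.

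First I would unfold definitions. By Definition~\ref{def-wc},
\[
f_{\text{WC}}^c(\dual{v},\dual{\alloci},\dual{\alloci[j]})
= f^c\bigl(\dual{v},\,\dual{\alloci}\setminus\dual{\alloci[j]},\,\dual{\alloci[j]}\setminus\dual{\alloci}\bigr),
\]
and by the definition of the complementary criterion $f^c$ (Definition~\ref{def:comp-criterion}), together with $-\dual{v}=v$ from Proposition~\ref{prop:duality}, this equals
\[
f\bigl(v,\,\dual{\alloci[j]}\setminus\dual{\alloci},\,\dual{\alloci}\setminus\dual{\alloci[j]}\bigr).
\]

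Next I would carry out the key set-theoretic identity, which is the one place the duality of the item sets really does the work. Since $\dual{\alloci}=\types\setminus\alloci$ and $\dual{\alloci[j]}=\types\setminus\alloci[j]$, a short computation gives
\[
\dual{\alloci[j]}\setminus\dual{\alloci}
=(\types\setminus\alloci[j])\setminus(\types\setminus\alloci)
=\alloci\setminus\alloci[j],
\]
and symmetrically $\dual{\alloci}\setminus\dual{\alloci[j]}=\alloci[j]\setminus\alloci$. Substituting into the expression above yields
\[
f_{\text{WC}}^c(\dual{v},\dual{\alloci},\dual{\alloci[j]})
= f(v,\,\alloci\setminus\alloci[j],\,\alloci[j]\setminus\alloci)
= f_{\text{WC}}(v,\alloci,\alloci[j]),
\]
which is exactly the equality we wanted for every pair $i,j$.

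Finally I would note that Proposition~\ref{prop:duality} tells us $\dual{\allocs}$ is a bona fide exclusive allocation of $\dual{\items}$, so quantifying over all ordered pairs $(i,j)\in\agents\times\agents$ on both sides is legitimate, and involutivity of duality gives the converse direction without extra work. There is no real obstacle here; the only subtlety is bookkeeping the argument swap performed by $f^c$ together with the set-complementation performed by the dual — those two swaps exactly cancel, which is what makes the ``without commons'' refinement precisely the right vehicle for the duality.
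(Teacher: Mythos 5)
Your proposal is correct and follows essentially the same route as the paper's proof: both reduce the claim to the pointwise identity $f_{\text{WC}}(v,\alloci,\alloci[j])=f^c_{\text{WC}}(\dual{v},\dual{\alloci},\dual{\alloci[j]})$ for each pair $i,j$, using the observation that $\alloci\setminus\alloci[j]=\dual{\alloci[j]}\setminus\dual{\alloci}$ and $\alloci[j]\setminus\alloci=\dual{\alloci}\setminus\dual{\alloci[j]}$ so that the argument swap in $f^c$ cancels against the complementation of the bundles. You merely traverse the chain of equalities in the opposite direction, and your extra remark that $\dual{\allocs}$ is a legitimate exclusive allocation is a point the paper handles via the same proposition.
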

\begin{proof}
We prove $f_{\text{WC}}(\vau,\alloci,\alloci[j])=f^c_{\text{WC}}(\dual{\vau},\dual{\alloci},\dual{\alloci[j]})$ for all $i,j\in\agents$. Let $O_i=\alloci\setminus\alloci[j]$ and $O_j=\alloci[j]\setminus\alloci$. Note that $O_i=\dual{\alloci[j]}\setminus\dual{\alloci}$ and $O_j=\dual{\alloci}\setminus\dual{\alloci[j]}$. Thus,
\[
\begin{split}
f_{\text{WC}}&(\vau,\alloci,\alloci[j])=f(\vau,O_i,O_j)
 =f^c(-\vau,O_j,O_i)\\
 &=f^c_{\text{WC}}(\dual{\vau},\dual{\alloci},\dual{\alloci[j]}).
\end{split}
\]
\end{proof}



It follows from Theorem~\ref{thm:wc-duality} that an exclusive allocation is $\efxwc$ for goods if and only if its dual allocation is $\efxwc$ for chores. An important application is
Corollary~\ref{cor-efx-for-chores-iff-efxwc} in Section~\ref{sec-warmup-efx}, since the dual of a setting of chores with no copies is a setting of goods with $n-1$ copies of each good, and in a setting of chores with no copies $\efxwc$ is identical to EFX.


\subsection{A Meta-theorem for Share-based Notions}
\label{sub:meta-share}

Does a similar duality result hold for share-based fairness? We give an affirmative answer for a class of share-based notions that satisfy a property we term \emph{linear shares}.  
As we show below, MMS and PROP are both notions in this class. 
The idea of a linear shares is as follows. 
Focus on one agent and consider all values from her perspective. 
For any allocation, our dual transformation shifts the value of every bundle in her eyes by a constant $d$ (in particular, $d=v(\types)$). 
It is natural to require that her fair share for this allocation also shifts by the same constant $d$. A linear share-based notion is one for which this property holds. The following definition formalizes this intuition, while generalizing the dual transformation to appropriate one-to-one mappings.






\begin{definition}[Linear shares]
\label{def:linear-share}
Two item sets $\items$ and $\items'$ are {\bf linearly related} w.r.t.~valuations $\vau[]$, $\vau[]'$ and a real constant~$d$ if there is a one-to-one mapping from exclusive allocations $\allocs\in\ea{\items}$ to exclusive allocations $\allocs'\in\ea{\items'}$  such that $\vai[]{\alloci}=v'(\alloci')+d$ for all $i\in\agents$. 
A share $s$ is {\bf linear} if for any item sets $\items$ and $\items'$ that are linearly related w.r.t.~$\vau[]$, $\vau[]'$ and $d$ we have $s(v,\items, n)=s(v',\items', n)+d$. 
\end{definition}

An example of linearly related item sets and valuations is $\items,\items',v,v'$ where $\items'$ is equal to $\items$ with an additional $n$ copies of a new item $t$, and $v'$ is equal to $v$ with an additional value for $t$. Thus if $s$ is a linear share 
then $s(v,\items, n)=s(v',\items', n)-v'(t)$. 

\begin{theorem}\label{thm-meta-share}
For any linear share $s$, an exclusive allocation $\allocs$ is $s$-share fair for $\items, v$ iff $\dual{\allocs}$ is $s$-share fair for $\dual{\items}, \dual{v}$.
\end{theorem}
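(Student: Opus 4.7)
The plan is to reduce the statement directly to the definition of linear shares by showing that the item set $\items$ and its dual $\dual{\items}$ are themselves linearly related with shift constant $d = v_i(\types)$ (viewed from each agent $i$'s perspective), with the dual operation $\allocs \mapsto \dual{\allocs}$ playing the role of the one-to-one mapping.

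First, I would fix an arbitrary agent $i$ and set $v = v_i$, $v' = \dual{v_i}$. By parts (1) and (3) of Proposition~\ref{prop:duality}, the map $\allocs \mapsto \dual{\allocs}$ is a one-to-one mapping between $\ea{\items}$ and $\ea{\dual{\items}}$. Next I would verify the required linear relation on values: for any exclusive $\allocs \in \ea{\items}$,
\[
\dual{v_i}(\dual{\alloci}) = -v_i(\types \setminus \alloci) = -v_i(\types) + v_i(\alloci),
\]
so $v_i(\alloci) = \dual{v_i}(\dual{\alloci}) + v_i(\types)$ for every $i$. This is precisely the condition of Definition~\ref{def:linear-share}, showing that $\items$ and $\dual{\items}$ are linearly related w.r.t.~$v_i$ and $\dual{v_i}$ with constant $d = v_i(\types)$.

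Now I would invoke the defining property of a linear share: applied to the pair above, it gives
\[
s(v_i,\items,n) = s(\dual{v_i},\dual{\items},n) + v_i(\types).
\]
Combining this identity with the value identity $v_i(\alloci) = \dual{v_i}(\dual{\alloci}) + v_i(\types)$, the inequality $v_i(\alloci) \geq s(v_i,\items,n)$ is equivalent, after subtracting $v_i(\types)$ from both sides, to $\dual{v_i}(\dual{\alloci}) \geq s(\dual{v_i},\dual{\items},n)$. Since this equivalence holds for every agent $i \in \agents$, the allocation $\allocs$ is $s$-share fair for $\items,v$ iff $\dual{\allocs}$ is $s$-share fair for $\dual{\items},\dual{v}$.

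There is no real obstacle here: the only conceptual point to check is that the dual operation indeed furnishes a valid witness in Definition~\ref{def:linear-share} (a one-to-one mapping with a uniform additive shift), and that check is immediate from Proposition~\ref{prop:duality} together with the one-line computation of $\dual{v_i}(\dual{\alloci})$ above. Everything else is a rearrangement of inequalities.
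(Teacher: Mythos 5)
Your proposal is correct and follows essentially the same route as the paper's proof: both establish that $\items$ and $\dual{\items}$ are linearly related via the dual mapping with shift $d = v_i(\types)$, then invoke the linearity of $s$ to translate the share-fairness inequality. No gaps.
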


\begin{proof}
We prove that for every agent $i\in\agents$, item sets $\items$ and $\dual{\items}$ are linearly related w.r.t.~valuations $\vau$, $\dual{\vau}$ and $d=\vai{\types}$. Given an allocation $\allocs\in\ea{\items}$, let us consider the dual transformation, which is a one-to-one mapping that satisfies the condition in Definition~\ref{def:linear-share}:
\begin{equation*}
\begin{split}
\vai{\alloci[j]}- & \dual{\vau}(\dual{\alloci[j]}) =\vai{\alloci[j]} -(-\vai{\dual{\alloci[j]}}) \\
 & = \vai{\alloci[j]}+\vai{\types\setminus\alloci[j]} = \vai{\types}.
\end{split}
\end{equation*}

Suppose that $\allocs$ is $s$-share fair for $\items, v_i$. By definition, we have $\vai{\alloci}\ge s(\vau,\items, n)$ for all $i\in\agents$. Since the share $s$ is linear, $s(\vau,\items, n)=s(\dual{\vau},\dual{\items}, n)+\vai{\types}$. Therefore, $\dual{\vau}(\dual{\alloci})\ge s(\dual{\vau},\dual{\items}, n)\ \forall i\in\agents$, and $\dual{\allocs}$ is $s$-share fair for $\dual{\items}, \dual{v_i}$. 
The converse is proved similarly.
 \end{proof}

\ifdefined\aps{}
The linear share is a sufficient condition for the duality. We notice that a new notion called AnyPrice share \cite{fairshare_arbitraryentitlements} can have a similar duality without linearity. We discuss this in an appendix in the full version.
 \fi


\noindent{\bf Examples and non-examples of linear shares.} Linearity is an arguably natural property and indeed holds for the most prominent share-based notions. 

\begin{prop}
\label{pro:who-is-linear}
\emph{MMS} and \emph{PROP} are linear.
\end{prop}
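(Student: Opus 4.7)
The plan is to verify the linear-share property for PROP and MMS separately, using only the defining identity $v(\alloci) = v'(\alloci') + d$ satisfied by the one-to-one mapping $\phi: \ea{\items} \to \ea{\items'}$ that witnesses $\items$ and $\items'$ being linearly related w.r.t.\ $v$, $v'$ and $d$.

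For PROP, I would fix any single allocation $\allocs \in \ea{\items}$ together with its image $\allocs' = \phi(\allocs) \in \ea{\items'}$ and sum the defining identity over all $n$ agents. Since each allocation partitions its respective item set, $\sum_{i \in \agents} v(\alloci) = v(\items)$ and $\sum_{i \in \agents} v'(\alloci') = v'(\items')$, so $v(\items) = v'(\items') + n d$. Dividing by $n$ yields $\tfrac{1}{n} v(\items) = \tfrac{1}{n} v'(\items') + d$, which is exactly the linearity condition $s(v,\items,n) = s(v',\items',n)+d$ for the PROP share. Injectivity of $\phi$ is not even needed here; the identity on a single allocation suffices, thanks to additivity.

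For MMS, I would argue from the pointwise identity on bundles. For any $\allocs \in \ea{\items}$ paired with $\allocs' = \phi(\allocs)$, the equality $v(\alloci[j]) = v'(\alloci'[j]) + d$ for every $j \in \agents$ implies $\min_{j \in \agents} v(\alloci[j]) = \min_{j \in \agents} v'(\alloci'[j]) + d$. Interpreting the one-to-one mapping in Definition~\ref{def:linear-share} as a bijection between $\ea{\items}$ and $\ea{\items'}$ (consistent with the dual operation of Proposition~\ref{prop:duality}, which is bijective), taking the $\max$ over $\allocs \in \ea{\items}$ on the left is equivalent to taking the $\max$ over $\allocs' \in \ea{\items'}$ on the right, so $\max_{\allocs}\min_j v(\alloci[j]) = \max_{\allocs'}\min_j v'(\alloci'[j]) + d$. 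This is the linearity condition for the MMS share.

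The only mildly subtle point, not really an obstacle, is the MMS step where the $\max$ is transferred from one domain to the other: it requires the mapping $\phi$ to range over all of $\ea{\items'}$, rather than merely being injective. This matches the intended use of the notion (in particular its application to the dual transformation, which is a bijection by Proposition~\ref{prop:duality}) and can be noted explicitly in the write-up.
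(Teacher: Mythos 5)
Your proposal is correct and follows essentially the same route as the paper's own proof: summing the identity $v(\alloci)=v'(\alloci')+d$ over agents for PROP, and shifting the inner $\min$ by $d$ and transferring the outer $\max$ across the mapping for MMS. Your explicit remark that the MMS step needs the mapping to be onto $\ea{\items'}$ (not merely injective) is a fair point that the paper leaves implicit, and it holds in the intended application since the dual transformation is a bijection.
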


\ifdefined\full
\begin{proof}
Suppose that item sets $\items$ and $\items'$ are  linearly related with $v$, $v'$ and $d$.
We have a one to one mapping from exclusive allocations $\allocs\in\ea{\items}$ to exclusive allocations $\allocs'\in\ea{\items'}$  such that $\vai[]{\alloci}=v'(\alloci')+d$ for all $i\in\agents$ and all allocations. 

For MMS, we have 
\begin{equation*}
\begin{split}
\max_{\allocs\in\ea{\items}}\min_{j\in\agents}\vai{\alloci[j]} &=\max_{\allocs\in\ea{\items'}}\min_{j\in\agents}\vau'(\alloci[j])+d\\
&=d+\max_{\allocs\in\ea{\items'}}\min_{j\in\agents}\vau'(\alloci[j])
\end{split}
\end{equation*}

For PROP, it holds that $\vai{\items} = \sum_{j\in\agents} \vai{\alloci[j]} = \sum_{j\in\agents} (\vau'(\alloci[j]') + d) = \vau'(\items')+n\cdot d$. 
So the value of PROP increases by $d$.
\end{proof}
\fi

Interestingly, more nuanced share-based notions do not always satisfy linearity, thus portraying the possible limits of the duality between fair allocations of copies and chores. 
For example, $\alpha$-MMS is not linear, since a shift $d$ in the valuation translates to a shift $\alpha d$ in the share. 
\unless\ifdefined\full{}
As we show in the full version, this is also the case for the new notion of \emph{truncated proportional share} (TPS), an extension of PROP
\fi
\ifdefined\full{}
We show this is also the case for the new notion of
\fi
\emph{truncated proportional share} (TPS), which is defined for goods in \cite{fairshare_arbitraryentitlements}.\footnote{\citet{fairshare_arbitraryentitlements} propose another interesting notion called \emph{AnyPrice share}, but extending it to the setting of copies seems challenging so we do not address it here.} 
\ifdefined\full{}
The notion is an extension of PROP and it has been shown that $\text{PROP}\ge \text{TPS}\ge \text{MMS}$.

\begin{definition}
(Truncated Proportional Share for goods) Given an instance and an agent $i$,  the truncated proportional share is

    TPS($v_i,\items,n$)=$\max_z \{ z | \frac{1}{n} \sum_{g\in \items} \min\{\vai{g},z\} = z\}.$


\end{definition}


The following example shows that TPS is not a linear share:

\begin{example}
Suppose that there are 3 agents and 3 goods. Let us focus on one agent's valuation: 2, 3, 5. The TPS would be 2 at this point. If we add 3 copies of a good with  valuation 3, then PROP would be $\frac{19}{3}>5$. After adding these good copies, the TPS becomes $\frac{19}{3}$, which increases more than 3. 
\end{example}

\ifdefined\extraTPS{}
We wish to further show that this does not just violate the linearity but also the duality. For this, we first have to define TPS for chores. 

\begin{definition}
(Truncated Proportional Share for chores) Given an instance and an agent $i$,  the truncated proportional share is

    TPS($v_i, \items, n$)=$\min \{ \frac{1}{n}\cdot v(\items), \min_{i\in \items} \{ v(i)\} \}$

\end{definition}

To the best of our knowledge, TPS was not defined for chores previously, so this definition requires some justification. First, it preserves the intuition of TPS for goods, where since large items prevent an equal distribution of goods, they are truncated for a more realistic estimation of the possible fair share. In the chores case, large items must be assigned, and so a fair share must consider receiving them as a singleton bundle. The definition also preserves the important property that $TPS \leq PROP$. Yet, it does not satisfy duality, as we see in the following example:

\begin{example}
\label{ex:TPS}
Consider $n=4$ agents, 6 goods with 1 copy each, and valuations $1,3,4,6,7,19$ respectively. It has $PROP = 10, TPS = 7$ as $\frac{1}{4} \sum_{g\in \items} \min\{v(g),7\} = \frac{1}{4} (1 + 3 + 4 + 6 + 7 + 7) = 7$, but for any $z > 7$, $\frac{1}{4} \sum_{g\in \items} \min\{v(g),z\} = \frac{1}{4} (1 + 3 + 4 + 6 + 7 + z) = 7 + \frac{z - 7}{4} \neq 7 + z - 7 = z$. The TPS value is implemented by allocation $\allocs = \{\{1,6\}, \{3,4\}, \{7\}, \{19\}\}$. 
The dual chores instance has $6$ chores with three copies each and valuations $-1, -3, -4, -6, -7, -19$ respectively. For this instance, $PROP = TPS = -30$ by direct calculation. But the dual allocation $\dual{\allocs}$ 
has a bundle with value $-33$, which is less than the TPS value. 
\end{example}

\fi

\fi




\section{Expanding the Envy-Based Hierarchy}
\label{sec:notions_hierarchy}

Given the duality between goods with copies and chores with copies, we now focus on the former. We first analyze the connections between the new envy-based fairness notions and the more standard ones, as summarized in Figure~\ref{fig:notions_hierarchy}.
In Section~\ref{sub:MMS} we show that this hierarchical structure implies strictly different approximation guarantees to MMS.


\begin{prop} 
\label{prop:inclusive_efxwc_eflwc} 
$\efxwc \implies \eflwc \implies \efowc$.
\end{prop}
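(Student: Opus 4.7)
The plan is to derive both implications directly by unpacking the ``without commons'' definitions: by Definition~\ref{def-wc}, each notion applies the corresponding $\text{EFX}/\text{EFL}/\text{EF1}$ comparison criterion to the \emph{private} parts $A_i\setminus A_j$ and $A_j\setminus A_i$ of the two bundles. Both implications should then follow the same skeleton as the classical chain $\text{EFX}\Rightarrow\text{EFL}\Rightarrow\text{EF1}$, applied to these private parts in place of the full bundles. Once verified for goods, both implications transport to the chores setting automatically via Theorem~\ref{thm:wc-duality}, so no separate chore argument is needed.

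For $\efxwc\Rightarrow\eflwc$, I would fix a pair $i,j$ and split on $|A_j\setminus A_i|$. If $|A_j\setminus A_i|\le 1$, the first clause of $\eflwc$ holds immediately. Otherwise, let $g^\star\in A_j\setminus A_i$ be a most valuable item (for agent $i$) within $A_j\setminus A_i$. Applying $\efxwc$ with the choice $g^\star$ yields $v_i(A_i\setminus A_j)\ge v_i((A_j\setminus A_i)\setminus\{g^\star\})$, which handles the ``remaining bundle'' half of the $\eflwc$ inequality. For the ``$\ge v_i(g^\star)$'' half, I would exploit that $|A_j\setminus A_i|\ge 2$ by picking any $g'\in (A_j\setminus A_i)\setminus\{g^\star\}$ and applying $\efxwc$ to $g'$: this gives $v_i(A_i\setminus A_j)\ge v_i((A_j\setminus A_i)\setminus\{g'\})\ge v_i(g^\star)$, where the final step uses additivity together with non-negativity of good values, since $g^\star$ is one summand on the right-hand side.

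For $\eflwc\Rightarrow\efowc$, I would handle the boundary sizes of $A_j\setminus A_i$ separately. The empty case is vacuous under the ``empty-sense'' convention already adopted for EF1 in Definition~\ref{def-fair-normal}. If $|A_j\setminus A_i|=1$, picking the unique element $g$ makes $(A_j\setminus A_i)\setminus\{g\}=\emptyset$, so $v_i(A_i\setminus A_j)\ge 0=v_i(\emptyset)$ by non-negativity of goods. For $|A_j\setminus A_i|\ge 2$, the first clause of $\eflwc$ fails, so its second clause directly delivers a witness $g\in A_j\setminus A_i$ satisfying $v_i(A_i\setminus A_j)\ge v_i((A_j\setminus A_i)\setminus\{g\})$, which is exactly the $\efowc$ requirement.

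The only delicate step is the first implication, which invokes $\efxwc$ twice and relies essentially on having $|A_j\setminus A_i|\ge 2$ to bound $v_i(g^\star)$; the rest is routine case analysis on the cardinality of $A_j\setminus A_i$ and on where the EFL first-vs-second clause distinction lands.
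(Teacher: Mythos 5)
Your proposal is correct and follows essentially the same route as the paper: both split on the cardinality of the envied private part $A_j\setminus A_i$, and both establish the ``$\ge v_i(g^\star)$'' half of the $\eflwc$ clause by applying the $\efxwc$ inequality to a \emph{second} item $g'$ of the private part and using monotonicity (the paper phrases this as a proof by contradiction, you argue it directly, but the key step is identical). Your explicit handling of the $|A_j\setminus A_i|\le 1$ boundary cases in the second implication is slightly more careful than the paper's one-line argument, but the content is the same.
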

\ifdefined\full
\begin{proof}
($\efxwc \implies \eflwc$)
Fix an $\efxwc$ allocation $\allocs$ and $i,j \in \agents$. We wish to show $j$ does not $\eflwc$-envy $i$. If $|\alloci[i] \setminus \alloci[j]|\leq 1$, the first $\eflwc$ condition holds. Thus assume that $|\alloci[i] \setminus \alloci[j]|\geq 2$. Suppose that there is a good $g \in \alloci[i] \setminus \alloci[j]$ s.t.~$v_i(g) >  v_i(\alloci[j] \setminus \alloci[i])$. Since there are at least 2 goods, there must be another good $g^* \in (\alloci \setminus \alloci[j]) \setminus g$. For $g^*$, we have

$$v_i((\alloci \setminus \alloci[j]) \setminus g^*) \stackrel{monotone}{\geq} v_i(g) >  v_i(\alloci[j] \setminus \alloci[i]),$$ violating the $\efxwc$ condition. 
We conclude that all goods $g \in \alloci \setminus \alloci[j]$ satisfy $v_i(g) \leq  v_i(\alloci[j] \setminus \alloci[i])$. Again by the $\efxwc$ condition, $v_i((\alloci[i] \setminus \alloci[j]) \setminus \{g\})) \leq  v_i(\alloci[j] \setminus \alloci[i])$. We can therefore choose an arbitrary good and the second $\eflwc$ condition holds. 

($\eflwc \implies \efowc$) The only difference occurs when the second $\eflwc$ condition is invoked. Then, there is an extra constraint over the choice of good $g$ for $\eflwc$ condition that does not appear in the $\efowc$ definition. Therefore, the $\eflwc$ definition is stricter than the $\efowc$, and the implication follows. 
\end{proof}
\fi

\begin{example}($\eflwc \centernot \implies \efxwc$) Consider two agents and five goods with a single copy each, and identical valuations $a = b = 1, c = d = 1+\epsilon, e = \epsilon = 0.01$. The allocation $\{\{c,d,e\},\{a,b\}\}$ is $\eflwc$ but not $\efxwc$. 
\end{example}


\begin{prop}
$\efxwc \centernot\implies \text{EFL}, \text{EFL}\centernot\implies \efowc$. 
\end{prop}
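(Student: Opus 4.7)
For both separations, I would construct two-agent instances with copies, exploiting the asymmetry between EFL (which may remove any item from the envied bundle, including common items) and the ``without commons'' notions (which may only remove non-common items).

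For $\efxwc \centernot\implies \efl$, my plan is to engineer an allocation where the envied bundle consists almost entirely of items the envier also owns, so that the ``without commons'' comparison becomes vacuous, while a single rare item of large value in the envied bundle dooms EFL regardless of which item is removed. Concretely, I would take $n=2$ agents with identical valuations, one good $a$ with a single copy of value $10$, and two goods $b,c$ with two copies each of value $1$, allocating $\alloci[1]=\{b,c\}$ and $\alloci[2]=\{a,b,c\}$. The $\efxwc$ check reduces to a trivial vacuous inequality once $b$ and $c$ are treated as common, and agent~$2$ is not envious. For EFL, I would enumerate over $g\in\alloci[2]$ and verify that either the singleton value $\vai[1]{g}$ or the remaining bundle value exceeds $\vai[1]{\alloci[1]}=2$: removing $a$ fails via $\vai[1]{a}=10>2$, while removing $b$ or $c$ fails since $a$ still remains and $\vai[1]{\alloci[2]\setminus\{g\}}\geq 10>2$.

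For $\efl\centernot\implies\efowc$, my plan is to invert the copy pattern so that the valuable item is the common one. Take $n=2$ agents with identical valuations, one good $a$ with two copies of value $10$, and goods $b,c$ with a single copy each of value $1$, allocating $\alloci[1]=\{a\}$ and $\alloci[2]=\{a,b,c\}$. EFL is then satisfied by the removal $g=a$: the remaining $\{b,c\}$ has value $2$, $\vai[1]{a}=10$, and $\max\{2,10\}=10=\vai[1]{\alloci[1]}$. However, $\efowc$ disallows removing $a$ (it is common); since $\alloci[1]\setminus\alloci[2]=\emptyset$ has value $0$ while $\alloci[2]\setminus\alloci[1]=\{b,c\}$, removing either unique good leaves value $1>0$, and $\efowc$ is violated.

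The verification is routine additive arithmetic; the real content is the observation that multiple copies allow bundles to overlap significantly, which is precisely what separates EFL's full-bundle removal from the restricted removal permitted by the without-commons variants. I do not anticipate a substantial obstacle beyond selecting the examples, and the symmetric envy direction (agent~$2$ toward agent~$1$) is trivial in both cases since $\vai[2]{\alloci[2]}\geq\vai[2]{\alloci[1]}$ under identical monotone valuations.
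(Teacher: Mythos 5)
Your proposal is correct and follows the same counterexample strategy as the paper, which also separates these notions by exhibiting explicit allocations (the paper reuses its three-agent duality example for $\efxwc \centernot\implies \text{EFL}$ and a three-agent instance for $\text{EFL}\centernot\implies \efowc$). Your two-agent instances are valid and arguably more minimal; the arithmetic checks out, including the vacuous $\efxwc$ comparison in the first example and the EFL witness $g=a$ in the second.
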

The first negation is due to Example~\ref{ex-dual}. The second negation is due to Example~\ref{ex-EFL-does-not-imply-efowc}:

\begin{table}[]
     \centering
     \begin{tabular}{c|c|c|c|c|c}
          Goods & H & a & b & c & d\\
          \hline 
          Copies  & 1 & 2 & 1 & 1 & 1 \\
          \hline
           $v$ & 1000 & 100 & 1 & 2 & 2 \\
     \end{tabular}
     \caption{Instance with an EFL allocation that is not $\efowc$.}
     \label{tab:efl-not-efowc}
 \end{table}

\begin{example} ($\text{EFL} \centernot\implies \efowc$)
\label{ex-EFL-does-not-imply-efowc}
Consider three agents, and goods as given in Table~\ref{tab:efl-not-efowc}, where $v$ is the identical valuation of all agents. Then
$\allocs = \{ \{a,b\}, \{a,c,d\}, \{H\}\}$
is EFL, but not $\efowc$, as agent 1 $\efowc$-envies agent 2.  
\end{example}

 \begin{example}  ($\text{EF1} \centernot\implies \efowc$)
 Consider three agents, five goods $a,b,c,d,e$ with two copies each, and identical values $v(a) = 1$, $v(b) = v(c) = \frac{1}{2}, v(d) = v(e) = \epsilon = \frac{1}{100}$. Consider the allocation 
 $\allocs = \{ \{a, d, e\}, \{a, b, c\}, \{b, c, d, e\}$. It is EF1, and players' values are $(1.02,2,1.02)$. But agent 1 $\efowc$ envies agent 2 (the special good for the EF1 condition is $a$). For comparison,  $\allocs' = \{ \{a, b, e\}, \{a, c, d\}, \{b, c, d, e\}$
 is both EF1 and $\efowc$. 
 \end{example}
 
\subsection{MMS Approximations}
\label{sub:MMS}

%
%
%
We show upper bounds on the MMS approximation guarantees of the three ``without commons'' notions introduced above.%
\footnote{Since our duality theorems do not hold for $\alpha$-MMS, the results in this section cannot be directly transformed to chores.} 
We also establish a lower bound for the approximation guarantee of $\eflwc$, thus separating $\efxwc$ and $\eflwc$ from $\efowc$ by a
factor that grows to infinity with $n$.
It is interesting to note that the bounds we show for goods with copies are strictly lower than known bounds for goods without copies: for $\efxwc$ we give an upper bound of $0.4$ while without copies a lower bound of $\frac{4}{7}$ is known \cite{comparingNotions2018}; for $\eflwc$ we give an upper bound of $\frac{1}{3}$ while without copies a lower bound of $\frac{1}{2}$ is known \cite{barman2018groupwise}.


\paragraph{$\efowc$.} \citet{comparingNotions2018} show that EF1 
allocations do not guarantee an approximation strictly larger than $\frac{1}{n}$ to MMS for goods without copies. Upper bounds on $\alpha$ for goods without copies immediately apply to goods with copies, since the former is a special case of the latter. 
\unless\ifdefined\full{} 
In the full version, for completeness,
\fi
\ifdefined\full{} 
For completeness,
\fi
we include an example showing that
$\efowc$ cannot guarantee strictly more than $\frac{1}{n}$-MMS for goods with copies.


\ifdefined\full{} 
\begin{example}[$\efowc$ cannot guarantee strictly more than $\frac{1}{n}$-MMS for goods with copies.]
\label{ex:EF1-low-MMS}
Consider an example with $n$ agents, all goods have one copy each, and identical valuations. Let $L_1,..., L_n$ be goods, all with value 1, and let $H_1,...,H_{n-1}$ be additional goods, all with value $n$. Then 
$$\allocs = \{\underbrace{\{H_{\theta},L_{\theta}\}}_{1\leq \theta \leq n-1}, \{L_n\}\}.$$
is an $\efowc$ allocation with $v(\alloci[n]) = v(L_n) = 1$, while 
 an MMS of $n$ is guaranteed for agent $n$ by 
$$\allocs ' = \{\underbrace{\{H_{\theta}\}}_{1\leq \theta \leq n-1}, \{L_1,...,L_n\}\}. $$

\end{example}
\fi

\paragraph{\bf $\efxwc$.} We give an upper bound of $0.4$ and a lower bound of $\frac{4}{11} \approx 0.36$.

\begin{example}[There is an $\efxwc$ allocation which is at most $0.4$-MMS for goods with copies]
 
Consider 13 agents and 9 goods as given in Table~\ref{tab:efx-0.4-mms}, where $v$ is the valuation of agent 13, and all other agents value all goods as $1$. 
 \begin{table}[]
     \centering
     \begin{tabular}{c|c|c|c|c|c|c|c|c|c}
          Goods & H & x & x' & x'' & y & y' & $y_a$ & $y_b$ & $y_c$\\
          \hline 
          Copies  & 6 & 7 & 3 & 3 & 3 & 3 & 1 & 1 & 1 \\
          \hline
           $v$ & 2.5 & 1 & 1 & 1 & $\frac{1}{2}$ & $\frac{1}{2}$ & $\frac{1}{2}$ & $\frac{1}{2}$ & $\frac{1}{2}$
     \end{tabular}
     \caption{$\efxwc$ 0.4-MMS upper bound.}
     \label{tab:efx-0.4-mms}
 \end{table}
 %
 The following allocation is $\efxwc$:
 $$\allocs = \{\underbrace{\{H,x\}}_{\times 6}, \underbrace{\{x',x''\}}_{\times 3}, \underbrace{\{y, y', y_{\theta}\}}_{\theta \in (a,b,c)}, \{x\}\}.$$
 %
 %
 In the following allocation all agents have a value of exactly 2.5 in terms of $v$:
  $$\allocs ' = \{\underbrace{\{H\}}_{\times 6}, \underbrace{\{x,x',y\}}_{\times 3}, \underbrace{\{x, x'', y'\}}_{\times 3}, \{x, y_a, y_b, y_c\}\}.$$

 \end{example}

We establish the following $\frac{4}{11}$-MMS lower bound by appropriately generalizing a proof of \cite{comparingNotions2018} to the case of goods with copies. We intentionally keep similar notations to allow easy comparison.

\begin{theorem}
\label{thm:efx-wc_separation}
An $\efxwc$ allocation is at least $\frac{4}{11}$-MMS for goods with copies. 
\end{theorem}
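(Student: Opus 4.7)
The plan is to adapt the Amanatidis et al.~(2018) argument that EFX allocations guarantee $\frac{4}{7}$-MMS in the single-copy goods setting, accounting for the two additional sources of slack in our setting: common items (which $\efxwc$ disregards) and the copies structure. Fix an arbitrary agent $i$ and normalize her MMS value to $1$, so the target becomes $v_i(A_i) \geq \frac{4}{11}$, which I will establish by contradiction assuming $v_i(A_i) < \frac{4}{11}$.

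The key inequality I will extract from $\efxwc$ is: for every $j \neq i$ and every $g \in A_j \setminus A_i$,
$$v_i((A_j \setminus A_i) \setminus \{g\}) \leq v_i(A_i \setminus A_j) \leq v_i(A_i) < 4/11.$$
Choosing $g$ to be a minimum-value item in $A_j \setminus A_i$ yields $v_i(A_j \setminus A_i) < 4/11 + v_i(g^{\min}_j)$. Combined with the trivial bound $v_i(A_j \cap A_i) \leq v_i(A_i) < 4/11$, this gives per-bundle control on $v_i(A_j)$ in terms of a single ``critical'' item value.

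Next, I classify items by value relative to the MMS threshold of $1$. Call a good $t$ \emph{large} if $v_i(t) \geq 4/11$ and \emph{small} otherwise. Since $v_i$ is additive and non-negative on goods, $v_i(A_i) < 4/11$ forces $A_i$ to contain only small items. Now fix an MMS-optimal partition $(P_1,\dots,P_n)$ for agent $i$, each with $v_i(P_k) \geq 1$. Since both $\{A_j\}$ and $\{P_k\}$ are exclusive partitions of the multiset $\items$, we have $\sum_j v_i(A_j) = \sum_k v_i(P_k) \geq n$. The contradiction will then come from summing the per-bundle bounds above against this total: each $v_i(A_j)$ is bounded by $8/11$ plus the value of at most one critical item from $A_j\setminus A_i$, and the contribution of large items is further controlled by the fact that they can lie in at most $n{-}1$ bundles (never in $A_i$) while each MMS bundle $P_k$ needs to sum to at least $1$ from them and from small items. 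Carefully tuning the large/small threshold and the associated case analysis is what pins the extractable constant at exactly $4/11$.

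The main obstacle, I expect, is the copies-specific part of the accounting. Without copies each large item appears in exactly one bundle $A_j$, so one directly relates large items in the MMS partition to the allocation; with copies a single large type can populate up to $n$ bundles, and its status as ``common'' with $A_i$ changes whether it enters the $\efxwc$ inequality at all. Correctly tracking copy multiplicities---in particular exploiting that items in $A_i \cap A_j$ are invisible to $\efxwc$ and that the total weighted presence of each type across the $n$ bundles equals its multiplicity $k_t$---is precisely what degrades $\frac{4}{7}$ to $\frac{4}{11}$, and most of the technical effort will go into setting up the invariants so that the final sum closes at $\frac{4}{11}$ rather than a weaker constant.
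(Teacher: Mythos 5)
Your high-level instinct is right -- the paper's proof is indeed a generalization of the Amanatidis et al.\ argument, your key $\efxwc$ inequality $v_i((A_j\setminus A_i)\setminus\{g\})\le v_i(A_i\setminus A_j)\le v_i(A_i)$ is the correct starting point, and the observation that the common part costs an extra additive $v_i(A_i)$ (via $v_i(A_j\cap A_i)\le v_i(A_i)$, which holds because MMS bundles are exclusive) is exactly what degrades $\frac{4}{7}$ to $\frac{4}{11}$. However, the accounting you propose does not close. Summing $\sum_j v_i(A_j)=\sum_k v_i(P_k)\ge n\cdot\mms$ cannot yield a contradiction: the bundles $A_j$ with $|A_j\setminus A_i|\le 1$ may contain a single item of unbounded value (the $\efxwc$ condition is vacuous there), and those same items inflate the left-hand side of your sum by exactly as much as they inflate the right-hand side, so no constant can be extracted from the global total. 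The standard way around this -- and what the paper does -- is not to sum over all MMS bundles but to \emph{discard} the problematic ones and average over the survivors.

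Concretely, the paper classifies the other agents by $|A_j\setminus A_\alpha|\in\{\le 1,\,2,\,\ge 3\}$ (not by an item-value threshold), proves that every item in a ``$=2$'' difference has value at most $v_\alpha(A_\alpha)$ and every ``$\ge 3$'' difference has total value at most $\tfrac{3}{2}v_\alpha(A_\alpha)$, and then removes from the MMS partition every bundle containing an item from a ``$\le 1$'' difference or two items from ``$=2$'' differences. A counting argument ($n'\ge x$ and $n'\ge \tfrac{x}{2}+y$, where $x$ counts surviving $S_2$ items and $y=|L_3|$) shows enough bundles survive that the average value of $A'_j\setminus A_\alpha$ over survivors is at most $(\tfrac{1}{4}\cdot\tfrac{x}{n'}+\tfrac{3}{2})v_\alpha(A_\alpha)\le\tfrac{7}{4}v_\alpha(A_\alpha)$; adding $v_\alpha(A_\alpha)$ for the common part gives a surviving MMS bundle of value at most $\tfrac{11}{4}v_\alpha(A_\alpha)$, hence the bound. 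Your sketch is missing this removal-and-counting mechanism entirely, and your uniform ``$8/11$ plus one critical item'' per-bundle bound collapses the $=2$ and $\ge 3$ cases, which by itself would only reach a constant weaker than $\frac{4}{11}$ even if the summation issue were repaired. Note also that, contrary to the worry in your last paragraph, the copies play essentially no role beyond the single term $v_\alpha(A'_j\cap A_\alpha)\le v_\alpha(A_\alpha)$; no tracking of multiplicities $k_t$ is needed.
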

\ifdefined\full{}
\begin{proof}
Suppose that allocation $\allocs$ is an $\efxwc$ allocation. Let us take the perspective of agent $\alpha$. We divide the agents into three disjoint sets: $L_1=\{i\in\agents\mid |\alloci\setminus\alloci[\alpha]|\le 1 \}$, $L_2=\{i\in\agents\mid|\alloci\setminus\alloci[\alpha]|=2 \}$, $L_3=\{i\in\agents\mid |\alloci\setminus\alloci[\alpha]|\ge3 \}$.  Define the  set of goods $\forall \theta \in \{1,2,3\}, S_{\theta} = \cup_{i\in L_{\theta}} (\alloci \setminus \alloci[\alpha]).$ 
\begin{claim}\label{claim-efxwc}
$\efxwc$ implies:
\begin{enumerate}
    \item For any good $g\in S_2$, $\vai[\alpha]{g}\le \vai[\alpha]{\alloci[\alpha]}.$
    \item For any agent $i\in L_3$, $\vai[\alpha]{\alloci\setminus\alloci[\alpha]}\le \frac{3}{2}\cdot \vai[\alpha]{\alloci[\alpha]}.$
\end{enumerate}

\end{claim}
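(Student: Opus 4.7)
The plan is to first prove the structural Claim~\ref{claim-efxwc}, and then use it to bound any MMS bundle's value in terms of $\vai[\alpha]{\alloci[\alpha]}$.

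For Claim~\ref{claim-efxwc}(1), take any $i \in L_2$, write $\alloci \setminus \alloci[\alpha] = \{g_1, g_2\}$, and invoke $\efxwc$ from $\alpha$'s perspective with respect to $i$, once removing $g_1$ and once removing $g_2$; the two inequalities, combined with monotonicity $\vai[\alpha]{\alloci[\alpha] \setminus \alloci} \le \vai[\alpha]{\alloci[\alpha]}$, give $\vai[\alpha]{g_j} \le \vai[\alpha]{\alloci[\alpha]}$ for $j=1,2$. For Claim~\ref{claim-efxwc}(2), take $i \in L_3$ with $\alloci \setminus \alloci[\alpha] = \{g_1, \dots, g_k\}$ where $k \ge 3$. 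Applying $\efxwc$ with each $g_j$ in turn gives $\vai[\alpha]{\alloci[\alpha]} \ge \vai[\alpha]{\alloci \setminus \alloci[\alpha]} - \vai[\alpha]{g_j}$, and summing over $j$ yields $k\,\vai[\alpha]{\alloci[\alpha]} \ge (k-1)\,\vai[\alpha]{\alloci \setminus \alloci[\alpha]}$, so $\vai[\alpha]{\alloci \setminus \alloci[\alpha]} \le \frac{k}{k-1}\,\vai[\alpha]{\alloci[\alpha]} \le \frac{3}{2}\,\vai[\alpha]{\alloci[\alpha]}$.

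Next, write $x = \vai[\alpha]{\alloci[\alpha]}$, let $M$ be $\alpha$'s MMS, and fix an exclusive partition $T_1, \dots, T_n$ of $\items$ with $\vai[\alpha]{T_j} \ge M$ for every $j$. For each $j$, decompose $T_j = (T_j \cap \alloci[\alpha]) \cup (T_j \setminus \alloci[\alpha])$, where the first piece contributes at most $x$. Each item in the second piece lies in $\alloci \setminus \alloci[\alpha]$ for some $i \ne \alpha$, and I would classify such items by whether $i \in L_1$, $L_2$, or $L_3$. By Claim~\ref{claim-efxwc}(1), each $L_2$-sourced item contributes at most $x$; by Claim~\ref{claim-efxwc}(2), the items sourced from each touched $L_3$-bundle together contribute at most $\frac{3}{2}\,x$.

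The main obstacle is controlling the $L_1$ contribution, since $\efxwc$ provides no direct bound on the single non-common item of an $L_1$-agent. I would handle this via a counting argument over the $n$ MMS bundles: each $L_1$-agent contributes at most one distinct item to $\items \setminus \alloci[\alpha]$, so the collection $S_1$ is sparse, and a pigeonhole or averaging selection of a particular $T_j$ lets me bound the $S_1$-mass concentrated in that $T_j$ (using $\vai[\alpha]{\items} \ge n M$ to control the total $S_1$-value in aggregate). Plugging all three bounds into $M \le \vai[\alpha]{T_j}$ and optimizing the case split across $L_1, L_2, L_3$ should produce an inequality of the form $M \le \frac{11}{4}\,x$; the sharp constant $\frac{4}{11}$ then emerges as the tight trade-off between the unconstrained $L_1$ contribution, the factor-$1$ $L_2$ bound from Claim~\ref{claim-efxwc}(1), and the factor-$\frac{3}{2}$ $L_3$ bound from Claim~\ref{claim-efxwc}(2).
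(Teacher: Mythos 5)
Your proof of the claim itself is correct. Part~(1) is essentially the paper's argument: for $i\in L_2$ with $A_i\setminus A_\alpha=\{g_1,g_2\}$, removing $g_2$ (resp.\ $g_1$) in the $\efxwc$ inequality leaves exactly $\{g_1\}$ (resp.\ $\{g_2\}$), and $v_\alpha(A_\alpha\setminus A_i)\le v_\alpha(A_\alpha)$ finishes it. Part~(2) takes a genuinely different, and arguably cleaner, route: the paper first isolates the minimum-value good $g$ of $A_i\setminus A_\alpha$, shows $v_\alpha(g)\le\frac{1}{2}v_\alpha(A_\alpha)$ (otherwise any $(k-1)$-subset would already exceed $v_\alpha(A_\alpha)$, contradicting $\efxwc$), and then writes $v_\alpha(A_i\setminus A_\alpha)=v_\alpha((A_i\setminus A_\alpha)\setminus\{g\})+v_\alpha(g)\le\frac{3}{2}v_\alpha(A_\alpha)$; you instead sum the $k$ inequalities $v_\alpha(A_\alpha)\ge v_\alpha(A_i\setminus A_\alpha)-v_\alpha(g_j)$ over $j$ to obtain $(k-1)\,v_\alpha(A_i\setminus A_\alpha)\le k\,v_\alpha(A_\alpha)$. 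Both arguments rely only on additivity and nonnegativity and both yield $\frac{3}{2}$ for $k\ge 3$; your averaging version has the small advantage of exhibiting the sharper bound $\frac{k}{k-1}$ for larger bundles for free. The rest of your proposal concerns Theorem~\ref{thm:efx-wc_separation} rather than the claim; there your plan diverges from the paper (which does not bound the $S_1$ mass inside an MMS bundle, but instead \emph{discards} every MMS bundle that touches $S_1$ or contains two goods of $S_2$ and averages over the survivors), and the $L_1$ step is only a sketch --- but that lies outside the statement you were asked to prove.
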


\ifdefined\full
\begin{proof}
  For the first inequality,  let $i$ be an agent in the set $L_2$. By the condition of $\efxwc$, $$\forall g\in \alloci\setminus\alloci[\alpha],\vai[\alpha]{ (\alloci\setminus\alloci[\alpha])\setminus g}\le\vai[\alpha]{\alloci[\alpha]\setminus\alloci}\le\vai[\alpha]{\alloci[\alpha]}.$$ As  $ |\alloci\setminus\alloci[\alpha]|=2$, $\vai[\alpha]{g} = \vai[\alpha]{ (\alloci\setminus\alloci[\alpha])\setminus g'} \le \vai[\alpha]{\alloci[\alpha]}$, where $g'$ is another good in the set $\alloci\setminus\alloci[\alpha]$.

   For the second inequality, $\min_{g\in \alloci\setminus\alloci[\alpha]}\vai[\alpha]{g}\le\frac{1}{2}\vai[\alpha]{\alloci[\alpha]},$ since otherwise we have for any good $g' \in A_i \setminus A_{\alpha},$
    \[
    \begin{split}
         v_{\alpha} & ((A_i \setminus A_{\alpha}) \setminus \{g'\}) = \sum_{g \in (A_i \setminus A_{\alpha}) \setminus \{g'\}} v_{\alpha}(g) \\
        & \geq \sum_{g \in (A_i \setminus A_{\alpha}) \setminus \{g'\}} \min_{g'' \in \alloci\setminus\alloci[\alpha]} \vai[\alpha]{g''} \\ & \geq  2 \min_{g'' \in \alloci\setminus\alloci[\alpha]} \vai[\alpha]{g''} > 2\cdot \frac{1}{2}v_{\alpha}(A_{\alpha}) = v_{\alpha}(A_{\alpha}), 
        \end{split}
        \]
\noindent
contradicting $\efxwc$. For $g = \arg\min_{g'\in \alloci\setminus\alloci[\alpha]}\vai[\alpha]{g'}$:
    
    \[
    \begin{split}
    & \vai[\alpha]{\alloci\setminus\alloci[\alpha]} = \vai[\alpha]{(\alloci\setminus\alloci[\alpha]) \setminus \{g\}} + \vai[\alpha]{g} \le \\
    & \vai[\alpha]{\alloci[\alpha]}+\min_{g\in \alloci\setminus\alloci[\alpha]}\vai[\alpha]{g}\le \frac{3}{2}\cdot \vai[\alpha]{\alloci[\alpha]}.
    \end{split}
    \]
\end{proof}
\fi
  
Suppose that allocation $\allocs^*$ is a maximin share allocation for agent $\alpha$. Let the allocation $\allocs'=\{\alloci\in\allocs^*\mid |\alloci\cap S_1|=0 \text{ and }  |\alloci\cap S_2|\le 1 \}$. We remove any bundle contains at least one good in $S_1$ or at least two goods in $S_2$ from the allocation $\allocs^*$.  Notice that allocation $\allocs'$ cannot be empty. As $\alpha\in L_1$, the number of bundles is at least $1+|S_1|+\frac{|S_2|}{2}$. And we remove at most $|S_1|+\frac{|S_2|}{2}$ bundles.  Let $n'=|\allocs'|$.

Next we prove that there is a bundle $\alloci[j]'$ in the allocation $\allocs'$ such that $\vai[\alpha]{\alloci[j]'}\le\frac{11}{4}\cdot \vai[\alpha]{\alloci[\alpha]}.$ Let $n'$ be the size of $|\allocs'|$, $y$ be the size of $|L_3|$ and $x$ be the number of goods in set $S_2$ appearing in the allocation $\allocs'$. 
\begin{claim}\label{claim-quan}
We have the following quantity relationships: (1) $n' \ge x$, (2) $n' \ge \frac{x}{2}+y$.
\end{claim}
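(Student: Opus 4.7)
The two inequalities are both counting arguments that relate the surviving bundles $\allocs'$ to the full MMS allocation $\allocs^*$ and to the structure of the original $\efxwc$ allocation $\allocs$ (which determines $L_1,L_2,L_3,S_1,S_2,S_3$). My plan is to first dispatch the easy direction and then refine the ``remove at most $|S_1|+|S_2|/2$'' bookkeeping from the previous paragraph so as to incorporate both $x$ and $y$.

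\emph{For (1):} By the very definition of $\allocs'$, every bundle $A\in\allocs'$ satisfies $|A\cap S_2|\le 1$. Summing over the $n'$ bundles of $\allocs'$ yields
\[
x \;=\; \sum_{A\in\allocs'} |A\cap S_2| \;\le\; n',
\]
which is the desired inequality.

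\emph{For (2):} The plan is to upper-bound the number of removed bundles $n-n'$ and translate this into a lower bound on $n'$. Every removed bundle either contains at least one type from $S_1$ (``$S_1$-heavy'') or contains at least two types from $S_2$ (``$S_2$-heavy''). The key refinement over the bound ``at most $|S_1|+|S_2|/2$'' is that the $x$ copies of $S_2$-types already sitting in $\allocs'$ are unavailable to $S_2$-heavy removed bundles, so the number of $S_2$-heavy removed bundles is at most $(|S_2|-x)/2$. Combined with the obvious bound of $|S_1|$ on the number of $S_1$-heavy removed bundles, this gives
\[
n-n' \;\le\; |S_1| + \frac{|S_2|-x}{2}.
\]
Then I would invoke the structural facts $|S_1|\le |L_1|-1$ (since $\alpha\in L_1$ contributes $\emptyset$ to $S_1$) and $|S_2|\le 2|L_2|$, together with $|L_1|+|L_2|=n-y$, to obtain
\[
n-n' \;\le\; (|L_1|-1)+|L_2|-\frac{x}{2} \;=\; n-y-1-\frac{x}{2},
\]
whence $n' \ge y + \frac{x}{2} + 1 > y+\frac{x}{2}$.

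\emph{Main obstacle.} The delicate point is the per-type counting in the presence of multiple copies. In the single-copy setting each type in $S_1\cup S_2$ sits in exactly one bundle of $\allocs^*$, which makes both bounds ``$|S_1|$ for $S_1$-heavy removed bundles'' and ``$(|S_2|-x)/2$ for $S_2$-heavy removed bundles'' immediate; here a single type may have several copies scattered across distinct bundles. I expect this to be addressed by exploiting exclusivity of $\allocs^*$ (each bundle contains at most one copy of a given type), so that the contribution of each type to the removed-bundle count can be amortized in the same way as in the single-copy argument, yielding the same combined bound $|S_1|+(|S_2|-x)/2$ on removed bundles.
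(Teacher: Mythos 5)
Your proof is correct and follows essentially the same route as the paper's: part (1) by summing $|A\cap S_2|\le 1$ over the surviving bundles, and part (2) by bounding the number of removed bundles by $|S_1|+\frac{|S_2|-x}{2}$ and combining with the agent-count inequality $n=|L_1|+|L_2|+|L_3|\ge|S_1|+\frac{|S_2|}{2}+y$. Your only deviation is exploiting $\alpha\in L_1$ to get the marginally stronger $n'\ge\frac{x}{2}+y+1$, which of course still implies the claim.
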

\ifdefined\full
\begin{proof}
The first inequality holds as each bundle in $\allocs'$ has at most one good from $S_2$. For the second inequality, we have $$n=|L_1|+|L_2|+|L_3|\ge |S_1|+\frac{|S_2|}{2}+y.$$ The number of removed bundles is at most $|S_1|+\frac{|S_2|-x}{2}$. Therefore, $n'\ge n-|S_1|-\frac{|S_2|-x}{2}\ge \frac{x}{2}+y$.
\end{proof}
\fi

By Claim \ref{claim-efxwc}, the total sum  of goods from sets $S_2$ and $S_3$ in allocation $\allocs'$ for agent $\alpha$ is upper bounded by 
\[
\begin{split}
    &x\cdot \max_{g\in S_2}\vai[\alpha]{g}+y\cdot \max_{i\in L_3}\vai[\alpha]{\alloci\setminus\alloci[\alpha]}\\
    \le&x\cdot \vai[\alpha]{\alloci[\alpha]}+y\cdot \frac{3}{2}\vai[\alpha]{\alloci[\alpha]}.
\end{split}
\]
By Claim \ref{claim-quan}, the average valuation is bounded by
 \begin{equation*}
\begin{split}
\frac{x+\frac{3}{2} y}{n'}\cdot \vai[\alpha]{\alloci[\alpha]} & \le \frac{x+\frac{3}{2}\cdot (n'-\frac{x}{2})}{n'}\cdot \vai[\alpha]{\alloci[\alpha]} \\
 & = (\frac{1}{4}\cdot\frac{x}{n'}+\frac{3}{2})\cdot \vai[\alpha]{\alloci[\alpha]}
 \le\frac{7}{4}\cdot \vai[\alpha]{\alloci[\alpha]}.
\end{split}
\end{equation*}
Therefore, there is an agent $j$ such that $\vai[\alpha]{\alloci[j]'\setminus\alloci[\alpha]}\le \frac{7}{4}\cdot \vai[\alpha]{\alloci[\alpha]}.$  We have  \begin{equation*}
\begin{split}
\vai[\alpha]{\alloci[j]'}&=\vai[\alpha]{\alloci[j]'\setminus\alloci[\alpha]}+\vai[\alpha]{\alloci[j]'\cap\alloci[\alpha]}\\
&\le\frac{7}{4}\cdot \vai[\alpha]{\alloci[\alpha]}+\vai[\alpha]{\alloci[\alpha]}
=\frac{11}{4}\cdot \vai[\alpha]{\alloci[\alpha]}.
\end{split}
\end{equation*}

\noindent
MMS is $\min_{i\in\agents}\vai[\alpha]{\alloci^*}\le\vai[\alpha]{\alloci[j]'}\le\frac{11}{4}\cdot \vai[\alpha]{\alloci[\alpha]}$.
\end{proof}
\fi


\paragraph{\bf $\eflwc$.} We give an upper and a lower bound of one-third.

\begin{example}[There is an $\eflwc$ allocation with at most $\frac{1}{3}$-MMS for goods with copies]
\label{ex:third-MMS}
\unless\ifdefined\full{}
See full version for details.
\fi
\ifdefined\full{}
Consider $2\ell+1$ agents and goods as given in Table~\ref{tab:efl-1/3-mms}, where $v$ is the valuation  of agent $2\ell+1$ and all other agents value all goods as $1$. 
  \begin{table}[]
     \centering
     \begin{tabular}{c|c|c|c|c|c}
          Goods & H & x & x' & $\forall_{1\leq i \leq \ell}, \mathbf{y_i}$ & $\forall_{1\leq i \leq \ell},\mathbf{z_i}$ \\
          \hline 
          Copies  & $\ell$ & $\ell+1$ & $\ell$ & 1 & 1 \\
          \hline
           $v$ & 3 & 1 & 1 & $1 - \frac{2}{\ell}$ & $\frac{2}{\ell}$ 
     \end{tabular}
     \caption{$\eflwc$ $\frac{1}{3}$-MMS upper bound}
     \label{tab:efl-1/3-mms}
 \end{table}
%
Then,
 $$\allocs = \{ \underbrace{\{H,x\}}_{\times \ell}, \underbrace{\{x',y_{\theta}, z_{\theta}\}}_{1\leq \theta \leq \ell}, \{x\} \} $$ is $\eflwc$, but $v(\alloci[2\ell+1]) = 1$, while a MMS of at least $3 - \frac{2}{\ell}$ is guaranteed by $$\allocs ' = \{ \underbrace{\{H\}}_{\times \ell}, \underbrace{\{x, x',y_{\theta}\}}_{1\leq \theta \leq \ell}, \{x, z_1,...,z_{\ell}\} \}.$$
 \fi
 \end{example}
 
  \begin{theorem}
  \label{thm:efl-wc_separation} An $\eflwc$ allocation is at least $\frac{1}{3}$-MMS for goods with copies.
 \end{theorem}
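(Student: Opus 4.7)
The plan is to mirror the proof of Theorem~\ref{thm:efx-wc_separation}, replacing the three-way partition of agents used there with a binary partition tailored to the (weaker) structure guaranteed by $\eflwc$. Fix an $\eflwc$ allocation $\allocs$ and an arbitrary agent $\alpha\in\agents$; I will exhibit a bundle in a given MMS-optimal allocation $\allocs^*$ for $\alpha$ whose value is at most $3\,v_\alpha(A_\alpha)$, which forces $v_\alpha(A_\alpha)\ge\tfrac13\,\mms_\alpha$.

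First I would split $\agents$ into $L_1=\{i\in\agents:|A_i\setminus A_\alpha|\le 1\}$ (note $\alpha\in L_1$) and $L_2=\{i\in\agents:|A_i\setminus A_\alpha|\ge 2\}$, and derive a per-agent bound for $i\in L_2$. Applying $\eflwc$ from $\alpha$'s perspective against $i$ produces some $g^*\in A_i\setminus A_\alpha$ with $v_\alpha(A_\alpha\setminus A_i)\ge\max\{v_\alpha(g^*),v_\alpha((A_i\setminus A_\alpha)\setminus g^*)\}$. Monotonicity gives $v_\alpha(A_\alpha\setminus A_i)\le v_\alpha(A_\alpha)$, and additivity propagates the bound to every $g\in A_i\setminus A_\alpha$, so $v_\alpha(g)\le v_\alpha(A_\alpha)$ for all such $g$. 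Summing the two inequalities supplied by $\eflwc$ then yields the key per-agent bound $v_\alpha(A_i\setminus A_\alpha)\le 2\,v_\alpha(A_\alpha)$ for every $i\in L_2$.

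Next I would view $S_1=\bigcup_{i\in L_1}(A_i\setminus A_\alpha)$ and $S_2=\bigcup_{i\in L_2}(A_i\setminus A_\alpha)$ as multisets of good-\emph{copies} (so that $|S_1|\le|L_1|-1$) and restrict attention to $\allocs'=\{A\in\allocs^*:A\cap S_1=\emptyset\}$. Because $\allocs^*$ is exclusive, every copy lies in exactly one of its $n$ bundles, so $|\allocs'|\ge n-|S_1|\ge|L_2|+1$. For every $A\in\allocs'$, any copy in $A\setminus A_\alpha$ originally belonged in $\allocs$ to some agent other than $\alpha$, and cannot come from $L_1$ (else it would lie in $A\cap S_1$); hence $A\setminus A_\alpha\subseteq S_2$, and double counting gives
\[
\sum_{A\in\allocs'} v_\alpha(A\setminus A_\alpha)\le\sum_{c\in S_2} v_\alpha(c)=\sum_{i\in L_2} v_\alpha(A_i\setminus A_\alpha)\le 2\,|L_2|\,v_\alpha(A_\alpha).
\]
Averaging over the at least $|L_2|+1$ bundles of $\allocs'$ produces some $A\in\allocs'$ with $v_\alpha(A\setminus A_\alpha)\le\tfrac{2|L_2|}{|L_2|+1}\,v_\alpha(A_\alpha)<2\,v_\alpha(A_\alpha)$, hence $v_\alpha(A)\le v_\alpha(A\cap A_\alpha)+v_\alpha(A\setminus A_\alpha)\le 3\,v_\alpha(A_\alpha)$ (the case $|L_2|=0$ is immediate, as then $A\setminus A_\alpha=\emptyset$ for some $A\in\allocs'$). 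Since $A\in\allocs^*$, this gives $\mms_\alpha\le v_\alpha(A)\le 3\,v_\alpha(A_\alpha)$.

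The main subtlety, exactly as in Theorem~\ref{thm:efx-wc_separation}, is the copies bookkeeping: one must treat $S_1$ and $S_2$ as multisets of copies (rather than as sets of types) so that the pigeonhole estimate $|\{A\in\allocs^*:A\cap S_1\ne\emptyset\}|\le|S_1|$ and the double-counting identity $\sum_{A\in\allocs^*}v_\alpha(A\cap S_2)=\sum_{c\in S_2}v_\alpha(c)$ are simultaneously valid; this is where the fact that $\allocs$ and $\allocs^*$ share the same multiset $\items$ of copies enters the argument. Otherwise the proof is a direct adaptation of the $\efxwc$ argument in which the per-agent bound $2\,v_\alpha(A_\alpha)$ replaces $\tfrac32\,v_\alpha(A_\alpha)$, which is precisely why the final guarantee drops from $\tfrac{4}{11}$ to $\tfrac13$ and matches the tight upper bound of Example~\ref{ex:third-MMS}.
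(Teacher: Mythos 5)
Your proposal is correct and follows essentially the same route as the paper's proof: partition agents by which $\eflwc$ clause they satisfy, derive the per-agent bound $v_\alpha(A_i\setminus A_\alpha)\le 2v_\alpha(A_\alpha)$ for the second class, discard the at most $|S_1|$ bundles of the MMS allocation that touch $S_1$, and average the $S_2$-mass over the remaining at least $|L_2|+1$ bundles before adding the $\le v_\alpha(A_\alpha)$ contribution from types shared with $A_\alpha$. The only differences are cosmetic (you include $\alpha$ in $L_1$ and fold the paper's residual multiset $R$ into the term $v_\alpha(A\cap A_\alpha)$, which is equivalent under exclusivity).
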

 \ifdefined\full
 \begin{proof}
 Suppose that $\allocs$ is an $\eflwc$ allocation, and consider the perspective of an agent $i^*$. 
 Among the remaining $n-1$ agents, let $L_1$ be the set of agents satisfying the first $\eflwc$ condition, and let $L_2$ be the remaining agents (that must thus satisfy the second condition). Denote $\ell = |L_1|, n-\ell - 1 = |L_2|$. Define $\forall \theta \in \{1,2\}, S_{\theta} = \cup_{j\in L_{\theta}} (A_j \setminus A_{i^*})$ (we allow multiple copies in the same set). Notice that for any $j\in L_2$ we have by the second $\eflwc$ condition
 that there is such good $g \in \alloci[j]$ with
 \begin{equation} \label{eq:eflwc_condition_L2}  \max \{v_{i^*}(\alloci[j] \setminus (\alloci[i^*] \cup \{g\})), v_{i^*}(g)\} \le v_i(\alloci[i^*] \setminus \alloci[j]),\end{equation}
 and so \[
 \begin{split}
      v_{i^*}(\alloci[j] \setminus \alloci[i^*]) 
      & \leq v_{i^*}(\alloci[j] \setminus (\alloci[i^*] \cup \{g\})) + v_{i^*}(g) \\
      &\stackrel{\text{Eq.~\ref{eq:eflwc_condition_L2}}}{\leq} 
      2v_{i^*}(\alloci[i^*] \setminus \alloci[j]) \\
      & \leq 2v_{i^*}(\alloci[i^*]),
      \end{split}
      \]
 which implies
 $$v_{i^*}(S_2) \leq \sum_{j \in L_2} v_{i^*}(\alloci[j] \setminus \alloci[i^*]) \leq 2(n-\ell - 1) v_{i^*}(\alloci[i^*]).$$
 
 All goods that are not in $S_1, S_2$ must have a copy in $\alloci[i^*]$. Denote all the remaining goods' copies $R$, then we overall have $\items = S_1 \cup S_2 \cup R$. 

 In any allocation $\allocs ' $, there are at most $\ell$ agents with goods from $S_1$. That is since the first $\eflwc$ condition for an agent $j$ requires $|\alloci[j] \setminus \alloci[i^*]| = 1$, and so we have $|S_1| = \sum_{j\in L_1}|\alloci[j] \setminus \alloci[i^*]| = \ell$. Let $L'$ then be the set of at least $n-\ell$ agents with no good from $S_1$, and let $\{S_2^j\}_{j\in L'}, \{R^j\}_{j\in L'}$ be the allocations of $S_2, R$ goods to these agents under $\allocs '$. Since
 $$ \sum_{j \in L'} v_{i^*}(S_2^j) \leq v_{i^*}(S_2), $$
 there exists some $j\in L'$ with 
 
 \[
 \begin{split}
  v_{i^*}(S_2^j) &\leq \frac{2(n-\ell - 1)}{|L'|}v_{i^*}(\alloci[i^*]) \\
 &\leq \frac{2(n-\ell - 1)}{n-\ell}v_{i^*}(\alloci[i^*]) < 2v_{i^*}(\alloci[i^*]).
 \end{split}
 \]
 
Since $\allocs '$ is an exclusive allocation, and since $R$ includes only goods with a copy in $\alloci[i^*]$, agent $j$ satisfies $v_{i^*}(R^j) \leq v_{i^*}(\alloci[i^*])$, and overall $v_{i^*}(\alloci[j]) = v_{i^*}(S_1^j) + v_{i^*}(S_2^j) + v_{i^*}(R^j) \leq 3v_{i^*}(\alloci[i^*])$. Since such an agent exists for any exclusive allocation, it exists for the MMS allocation, and so the minimal bundle in terms of $v_{i^*}$ in that allocation is bounded by $3v_{i^*}(\alloci[i^*])$. This shows the $\frac{1}{3}$-MMS guarantee. 
 \end{proof}
\fi

\section{Existence of $\efxwc$}
\label{sec:leveled}
In this section we show an existence result for leveled preferences \cite{Babaioff2017CompetitiveEW,ManjunathW21}, and identify where new ideas will be necessary to go beyond such preferences. 

\begin{definition}
A valuation $v$ is a \emph{\bf leveled preference} for goods if for any two bundles, $|B_1| > |B_2| \implies v(B_1) > v(B_2)$.
\end{definition}

We prove our existence result for goods with copies. By our duality framework (Theorem~\ref{thm:wc-duality}), 
this existence result
holds for chores and in fact for mixed goods and chores.

\ifdefined\resolvedIssues

\subsection{$\efowc$}
Since we fall within the settings of \cite{cardinalityConstraints2018}, their EF1 existence result translates to our settings. Still, as explained in the previous sections, the correct notion for our settings is $\efowc$, which is a stronger notion than EF1. That is, $\efowc \implies \text{EF1}$, but not vice versa.  Intuitively, that is because the special good that must satisfy the EF1 condition for some two agents $i,j$ has to belong to the set $\alloci[j] \setminus \alloci[i]$, and can not just be in the larger $\alloci[j]$. It is thus possible for an allocation to be EF1 and not $\efowc$:

Also note that proving $\efowc$ existence for additive goods with copies gives EF1 existence for chores as a corollary, while EF1 existence for goods with copies does not.

\ifdefined\full
We now present the envy-cycle cancellation algorithm for the existence, and prove its correctness. The \emph{envy graph} of an allocation $\allocs$, denoted by $E_{\allocs}$, is the directed graph where the nodes are the set of agents, and there is an outgoing edge $i\rightarrow j$ if $\vai{\alloci} < \vai{\alloci[j]}$.   

\begin{algorithm}[H]
\SetAlgoLined
\DontPrintSemicolon
\KwIn{$\agents$, $\items$ and valuations}
\KwOut{An $\efowc$ allocation $\allocs$}
 Initialize an empty allocation $\allocs$\;
 
 \For{j=1 to $\tau$}{
    Perform a topological sort over $E_{\allocs}$, and save the first $k_j$ agents into $K$\;
    
    Add a copy of the good $j$ for each of the agents in $K$\;
    
    Cancel cycles in $E_{\allocs}$\;
 }
 \caption{$\efowc$ allocation for a general additive goods with copies instance}
 \label{alg:EF1-WC}
\end{algorithm}

\begin{theorem}
Algorithm~\ref{alg:EF1-WC} always runs to completion and produces an $\efowc$ allocation. 
\end{theorem}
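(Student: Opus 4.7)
My plan is to induct on the iteration index $j$, maintaining the invariant that after iteration $j$ the current allocation $\allocs$ is $\efowc$ and the envy graph $E_{\allocs}$ is a DAG, so that the topological sort used in iteration $j+1$ is well-defined. The base case $j=0$ (the empty allocation) is trivial, and the inductive step breaks into three sub-claims.

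First, given an $\efowc$ allocation with DAG envy graph, handing one copy of $g_j$ to each agent in a top-$k_j$ prefix $K$ of a topological order preserves $\efowc$. When neither or both of $p, q$ lie in $K$, the differences $\alloci[p] \setminus \alloci[q]$ and $\alloci[q] \setminus \alloci[p]$ are unchanged and $\efowc$ carries over from the inductive hypothesis. When exactly one, say $q$, is in $K$, the $\efowc$ check from $q$'s view of $p$ only weakens because $q$'s bundle grows while $p$'s does not. For the check from $p$'s view of $q$, note that $p \notin K$ and $q \in K$ force $q$ to precede $p$ in the topological order, so no edge $p \to q$ is present in $E_{\allocs}$; hence $\vai[p]{\alloci[p]} \ge \vai[p]{\alloci[q]}$, equivalently $\vai[p]{\alloci[p] \setminus \alloci[q]} \ge \vai[p]{\alloci[q] \setminus \alloci[p]}$. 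Taking the newly added good $g_j \in (\alloci[q] \cup \{g_j\}) \setminus \alloci[p]$ as the $\efowc$ witness closes the case.

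Second, the cycle-cancellation step terminates by the standard potential argument: every rotation along a directed envy cycle strictly increases $\sum_i \vai{\alloci}$, and this quantity is bounded above by $\sum_i \vai{\types}$. When the procedure halts no directed cycle remains, reestablishing the DAG invariant for the next iteration.

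Third, and most delicately, a single cycle rotation must preserve $\efowc$. For a cycle $i_1 \to \dots \to i_m \to i_1$ with rotation $\alloci[i_l]^{\text{new}} := \alloci[i_{l+1}]^{\text{old}}$, and for an $\efowc$ check between $p = i_l$ on the cycle and some other agent $q$, the natural attempt is to reuse the old $\efowc$ witness $g \in \alloci[q]^{\text{old}} \setminus \alloci[i_l]^{\text{old}}$. Combining its defining inequality with the cycle condition $\vai[i_l]{\alloci[i_{l+1}]^{\text{old}}} \ge \vai[i_l]{\alloci[i_l]^{\text{old}}}$ gives the value bound $\vai[i_l]{g} \ge \vai[i_l]{\alloci[q]^{\text{new}}} - \vai[i_l]{\alloci[p]^{\text{new}}}$ for free. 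The main obstacle appears when this $g$ happens to lie in $\alloci[i_{l+1}]^{\text{old}} = \alloci[p]^{\text{new}}$: then $g$ no longer belongs to $\alloci[q]^{\text{new}} \setminus \alloci[p]^{\text{new}}$ and the set-membership side of $\efowc$ breaks. To resolve this I would invoke the $\efowc$ witness guaranteed along the cycle edge $i_l \to i_{l+1}$ itself (since $i_l$ envies $i_{l+1}$) and combine it with the original witness using additivity and the cycle inequalities to exhibit an alternative good in $\alloci[q]^{\text{new}} \setminus \alloci[p]^{\text{new}}$ of sufficient value. This is precisely where the argument departs from the classical EF1 cycle-cancellation proof of Lipton et al., and I expect it to be the main technical hurdle; the remaining pairs (both on the cycle, or neither on it) follow by analogous or immediate arguments.
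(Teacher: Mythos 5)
Your decomposition tracks the paper's own proof of this theorem almost exactly: the same two loop invariants (acyclic envy graph, $\efowc$ maintained), the same topological-prefix argument for the step distributing the $k_j$ copies of $g_j$ (if both or neither of $p,q$ lie in $K$ the difference sets are unchanged; if only $q\in K$ then $q$ precedes $p$ in the topological order, so $p$ did not envy $q$ and the newly added copy of $g_j$ serves as the $\efowc$ witness), and the same potential argument for termination of cycle cancellation. Those two parts are correct and essentially identical to the paper's.

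The gap is exactly where you flag it, and it is not a hurdle you can push through: the claim that a cycle rotation preserves $\efowc$ is \emph{false} for goods with copies. The paper itself exhibits a counterexample (Table~\ref{tab:cycle-cancle-small} in the ``Challenges'' discussion): a $3$-agent, $7$-type instance with an $\efxwc$ --- hence $\efowc$ --- allocation containing a single envy cycle between agents $1$ and $2$; after the rotation agent $1$ holds $\{H,a\}$ and $\efowc$-envies agent $3$'s unchanged bundle $\{H,c,d,e\}$, because once the common good $H$ is set aside no single removal from $\{c,d,e\}$ brings its value under $v_1(a)=1$. This is precisely the failure mode you isolate --- the old witness ($H$) migrates into the envier's new bundle and becomes a common item --- and the example shows that no alternative witness need exist, so the repair you sketch (recombining the witness along the cycle edge with the original one via additivity) cannot succeed in general. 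For what it is worth, the paper's own proof disposes of this step with the single assertion that cycle cancellation preserves envy-freeness, which its own later example contradicts, and the theorem together with its algorithm is fenced off behind an undefined compilation flag in the source. A correct proof would have to either avoid cycle cancellation altogether or establish additional invariants on the specific allocations reachable by the algorithm that exclude the bad configuration; neither you nor the paper supplies such an argument.
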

\begin{proof}
We prove two properties of $\allocs$ that are always maintained at the beginning (end) of the loop round:

\emph{$E_{\allocs}$ is acyclic} - for the empty allocation $\allocs$, $E_{\allocs}$ is an empty graph. Step 5 guarantees that $E_{\allocs}$ remains acyclic with each round of the loop, by using the primitive 

\emph{$\allocs$ has no $\efowc$-envious agents} - This holds for the empty allocation. Each round of the loop maintains that if a copy of a good was added to a bundle an agent envies, then it was also added to its own bundle (by the topological sorting). If a copy of a good was added to a bundle the agent does not envy, then removing it removes envy, which suffices for the $\efowc$ condition. The cycle cancellation preserves envy-freeness notions, as mentioned.  
\end{proof}
\fi

\unless\ifdefined\full
\begin{theorem}
There is a polynomial time algorithm that finds an $\efowc$ allocation for any additive goods with copies instance. 
\end{theorem}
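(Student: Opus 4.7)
The plan is to adapt the classical Lipton et al.\ envy-cycle cancellation scheme to handle multiple copies, processing one item type at a time. Define the envy graph $E_{\allocs}$ on $\agents$ with a directed edge $i\to j$ whenever $\vai{\alloci}<\vai{\alloci[j]}$. The algorithm maintains two invariants at the start of every round: (i)~the current partial allocation is $\efowc$, and (ii)~$E_{\allocs}$ is a DAG. Both hold trivially for the initial empty allocation.

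In the round handling item type $t\in\types$ with $k_t$ copies, I would first perform a topological sort of $E_{\allocs}$, form the receiving set $K$ consisting of the first $k_t$ agents in the order, and give one copy of $t$ to each agent in $K$. Preservation of $\efowc$ after this step follows from a case analysis on whether each pair $(i,j)$ lies in $K$. The key case is $i\notin K$, $j\in K$: because $j$ precedes $i$ in the topological order, there can be no edge $i\to j$, hence $\vai{\alloci}\ge\vai{\alloci[j]}$, and by additivity $\vai{\alloci\setminus\alloci[j]}\ge\vai{\alloci[j]\setminus\alloci}$. After the new copy of $t$ is added to $\alloci[j]$, it lies in the updated $\alloci[j]\setminus\alloci$, and designating it as the removed good in the $\efowc$ condition reduces the right-hand side exactly to the previous $\vai{\alloci[j]\setminus\alloci}$, so $\efowc$ survives. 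The remaining cases (both in $K$, both outside $K$, or $i\in K$, $j\notin K$) are immediate: the relevant set differences either do not change or change only in the envying agent's favor by monotonicity.

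To restore invariant (ii), I would repeatedly cancel any envy cycle $i_1\to i_2\to\cdots\to i_\ell\to i_1$ by rotating bundles so that $i_s$ takes the former bundle of $i_{s+1}$. Each rotation strictly increases the utilitarian welfare $\sum_i\vai{\alloci}$ while the multiset of bundles is unchanged; a standard potential argument gives polynomial-time termination, and the overall algorithm runs in time polynomial in $n$ and $|\types|$. For preservation of $\efowc$ during cancellation, a pair $(\ell,i_t)$ with $\ell$ outside the cycle is easy: $\ell$'s new comparison against $i_t$ is identical to its old comparison against $i_{t+1}$, which held inductively. For a pair $(i_t,i_s)$ inside the cycle, I would work with the additive form $\vai[i_t]{\alloci[i_t]}\ge\vai[i_t]{\alloci[i_s]}-\vai[i_t]{g}$ for some $g\in\alloci[i_s]\setminus\alloci[i_t]$, and argue that the strict increase in $i_t$'s self-value from the rotation compensates any shrinkage of the eligible removal set caused by $i_t$ absorbing goods from $\alloci[i_{t+1}]$.

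The main obstacle I anticipate is exactly this last piece: unlike plain EF1, the ``without commons'' restriction makes the eligible removal set bundle-dependent, so cycle rotation can in principle enlarge the right-hand side of the $\efowc$ inequality when a high-value good gets pulled into $i_t$'s new bundle. I expect additivity to save the argument, since the goods newly present in $i_t$'s bundle and excluded from the removal set are precisely those whose values contribute to the rotation's welfare gain, but the accounting needs care. If direct preservation proves too delicate, a clean fallback is to hand out the $k_t$ copies one at a time, always to the current source of $E_{\allocs}$, interleaving cycle cancellation between copies; this reduces the hard step to the classical EF1-style preservation of Lipton et al., combined with the single-copy-addition argument of the second paragraph.
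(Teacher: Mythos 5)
Your algorithm is essentially the one the paper itself uses for this theorem (topological sort of the envy graph, hand the $k_t$ copies of each type to the first $k_t$ agents, then cancel envy cycles), and your case analysis for the distribution step is sound. The gap is exactly the one you flag at the end, and it is fatal rather than a matter of careful accounting: envy-cycle cancellation does \emph{not} preserve $\efowc$ once copies are present. The paper itself exhibits a counterexample (the instance of Table~\ref{tab:cycle-cancle-small}): the allocation $\alloci[1]=\{b,c,d,e,f\}$, $\alloci[2]=\{H,a\}$, $\alloci[3]=\{H,c,d,e\}$ is $\efxwc$ (hence $\efowc$) and its envy graph contains a single cycle, between agents $1$ and $2$; after rotating, agent $1$ holds $\{H,a\}$ and $\efowc$-envies agent $3$, because the common good $H$ is excluded from the removal set and $v_1(\{c,d,e\}\setminus\{g\})\ge 1.1 > 1 = v_1(\{a\})$ for every choice of $g$. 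Additivity does not rescue the rotation argument: the value $i_t$ gains is measured against the bundle it gives up, whereas the $\efowc$ inequality that breaks is against a \emph{third} agent whose bundle now intersects $i_t$'s new bundle differently, so the right-hand side can grow by more than the left-hand side. Since the example has only one cycle, no choice of which cycle to cancel helps, and your fallback of distributing copies one at a time only changes the distribution step, not the cancellation step, so it inherits the same failure.

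For what it is worth, the paper's own proof asserts without justification that ``the cycle cancellation preserves envy-freeness notions,'' which is contradicted by its own example above; the authors flag precisely this obstacle in their discussion of why envy-cycle techniques break down for ``without commons'' notions. A correct proof therefore needs either a modified cancellation primitive together with an argument that the allocations actually reachable by the algorithm never enter the bad configuration, or a different route entirely (the duality theorem reduces the case $k_t=n-1$ for all $t$ to EF1 for chores, which is known to exist, but that covers only a special case).
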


\fi

\subsection{$\efxwc$ for (Almost) Identical Ordinal Preferences}
An instance is identical ordinal if all agents have the same ordering for all goods. \cite{IdenticalOrdinalPref} designs two efficient algorithms for goods and chores separately to find an EFX allocation in standard setting for identical ordinal preferences. In this section, we extend their result to the notion $\efxwc$ and a more general valuation class. Due to the duality, one algorithm for goods is sufficient for our model. 


{\bf Almost IO instance:}  There is an special agent $\omega$ and an ordering $\succ_{id}$ of goods $\types$ such that $g\succ_{id}g'$, we have $\vai{g}\ge\vai{g'}$ for all agents $i\neq\omega$.

We make use of the primitive of the envy graph cycle cancellation introduced by \cite{lipton2004approximately}. 
The \emph{envy graph} of an allocation $\allocs$, denoted by $E_{\allocs}$, is the directed graph where the nodes are the set of agents, and there is an outgoing edge $i\rightarrow j$ if $\vai{\alloci} < \vai{\alloci[j]}$.




\begin{algorithm}[H]
\SetAlgoLined
\DontPrintSemicolon
\KwIn{
An almost IO instance
}
\KwOut{$\allocs$, an $\efxwc$ allocation}
 Order the goods $g_1,\cdots,g_{\tau}$ by decreasing valuation according to the ordering $\succ_{id}$\;
 
 Initialize an empty allocation $\allocs$\;
 
 \For{j=1 to \tau}{
    Let $G=E_{\allocs}$ ($G$ does not change when $\allocs$ changes)\;
    
    Perform a topological sort over $G$ with priority to the agents who have a directed path going to agent $\omega$\;
    
    Let $K$ be the first $k_j$ agents according to order of topological sort\;
    
    
    Add a copy of the good $g_j$ for each of the agents in $K$\;\label{inalg-add}
    
   \If{ the best bundle for agent $\omega$ is  $h\in K$} {
        
        Let $s_0 = h,...,s_r = \omega$ be a path in $G$ from $h$ to $\omega$ such that if $s_i$ has a successor in $K$, then $s_{i+1}\in K$\; 
        
        Perform the cyclic reassignment $\allocs_{s_i} = \allocs_{s_{(i+1) mod (r + 1)}} $\;
    }
    Cancel cycles in $E_{\allocs}$\;
 }
 \caption{$\efxwc$ allocation for almost identical ordinal preference}
 \label{alg:EFX-WC_ordinal}
\end{algorithm}


\begin{theorem}
Algorithm~\ref{alg:EFX-WC_ordinal} always outputs an $\efxwc$ allocation in $O(\tau n^2)$ time. 
\end{theorem}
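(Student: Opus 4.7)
The plan is to establish the theorem by induction on the iteration index~$j$, maintaining three invariants at the start of each iteration: (a)~the current allocation $\allocs$ is $\efxwc$; (b)~the envy graph $E_{\allocs}$ is acyclic; and (c)~$\omega$'s bundle has the maximum $v_\omega$-value among bundles in $\allocs$. All three hold vacuously for the empty initial allocation; invariant~(c) is what rules out $\efxwc$-envy by $\omega$, while (a) handles pairs of non-$\omega$ agents.

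For the inductive step I would decompose an iteration into three sub-steps. \emph{Sub-step~1 (adding $g_j$ to $K$).} The only non-trivial case is a pair $(i,k)$ of non-$\omega$ agents with $i\notin K$ and $k\in K$, so that $g_j\in\alloci[k]\setminus\alloci[i]$ is fresh. Because $k$ precedes $i$ in the topological order on $G$ and $G$ is acyclic, $\vai[i]{\alloci[i]}\ge\vai[i]{\alloci[k]}$ at the iteration's start, equivalently $\vai[i]{\alloci[i]\setminus\alloci[k]}\ge\vai[i]{\alloci[k]\setminus\alloci[i]}$. Combined with the decreasing-value processing order -- which, since all non-$\omega$ agents share $\succ_{id}$, gives $\vai[i]{g}\ge\vai[i]{g_j}$ for every $g\in\alloci[k]$ -- one verifies the $\efxwc$ condition both with witness $g=g_j$ and with any other $g\in\alloci[k]\setminus\alloci[i]$. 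The remaining pairings follow from the previous iteration's $\efxwc$ invariant since the bundle differences are preserved or shift in the agent's favor. \emph{Sub-step~2 (cyclic reassignment).} The trigger ensures $\omega$'s favorite bundle post-addition is some $h\in K$; reassigning cyclically along a $G$-path from $h$ to $\omega$ hands $h$'s new bundle to $\omega$, restoring invariant~(c) and making $\omega$ envy-free. Each intermediate $s_i$ receives $s_{i+1}$'s new bundle, which it envied in $G$, so its $v_{s_i}$-value weakly improves. The $K$-successor priority preserves exclusivity: if $s_i\in K$ already holds $g_j$, the rule forces $s_{i+1}\in K$, so the incoming bundle carries exactly one copy of $g_j$. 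The $\efxwc$ condition between reassigned and other agents is then propagated along the path using the pre-reassignment $\efxwc$-freeness together with the $G$-envy relations. \emph{Sub-step~3 (cycle cancellation).} The standard primitive of \cite{lipton2004approximately} restores acyclicity of $E_{\allocs}$ while preserving $\efxwc$, and leaves $\omega$ untouched since $\omega$'s bundle is top-valued for itself.

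Correctness of the output follows from invariant~(a) at termination. The runtime is $O(\tau n^2)$ because each of the $\tau$ iterations performs a topological sort and path search in $O(n^2)$, a cyclic shift in $O(n)$, and amortized cycle cancellation in $O(n^2)$. The delicate part of the plan, I expect, is Sub-step~2: verifying that after the reassignment the $\efxwc$ condition holds between every pair -- in particular between an on-path agent and an off-path one, and between two on-path agents whose bundles both change. The interplay of the $K$-successor priority, the fact that the reassignment path is taken in $G$ (frozen at the iteration's start, even as $\allocs$ evolves), and the almost-identical-ordinal structure is where the main case analysis will be needed and where I would put the most effort.
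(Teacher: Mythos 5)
Your skeleton matches the paper's intended argument almost exactly: the same three invariants (the allocation is $\efxwc$, the envy graph is acyclic, and the special agent $\omega$ holds a bundle of maximum value to herself), the same decomposition of an iteration into adding $g_j$, the cyclic reassignment toward $\omega$, and cycle cancellation, and the same identification of the reassignment as the crux. Sub-step~1 is sound: for non-$\omega$ agents the decreasing processing order makes $g_j$ the minimum-valued good in $\alloci[k]\cup\{g_j\}$, so the binding witness in Definition~\ref{def-efxwc} is $g_j$ itself and the pre-existing non-envy $\vai{\alloci}\ge\vai{\alloci[k]}$ suffices.

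However, there is a genuine gap in Sub-steps~2 and~3, and it is not a matter of routine case-checking. Your argument for both steps rests on the principle that an agent who receives a (weakly) more valuable bundle cannot become $\efxwc$-envious. That principle is false for goods with copies: the $\efxwc$ condition quantifies over $g\in\alloci[j]\setminus\alloci$, so replacing $\alloci$ by a higher-valued bundle with a \emph{smaller intersection} with $\alloci[j]$ enlarges the witness set and can create new envy. The paper itself exhibits exactly this failure in the instance of Table~\ref{tab:cycle-cancle-small}: an $\efxwc$ allocation with identical ordinal preferences in which cancelling the unique envy cycle gives every participant a strictly better bundle yet destroys even $\efowc$. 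So your Sub-step~3 claim that the primitive of \cite{lipton2004approximately} ``restores acyclicity while preserving $\efxwc$'' cannot be invoked as a black box in this model, and the Sub-step~2 propagation along the reassignment path suffers from the same defect; both require an argument tailored to the states actually reachable by the algorithm (presumably exploiting that $g_j$ is the ordinally smallest allocated good together with the $K$-successor priority), which is precisely the case analysis you defer. For what it is worth, the paper's own proof of this theorem is left unfinished in the source, so you have correctly located the hard part --- but as written the proposal does not close it.
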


\ifdefined\full
\begin{proof}
We prove that the algorithm always maintains the allocation $\allocs$ an $\efxwc$ allocation.

We first prove until line 

\end{proof}
\fi
\fi


\begin{theorem}
There is an algorithm that always finds an $\efxwc$ exclusive allocation for goods with copies in the case of leveled preferences. Its runtime is $O(n|\types|^2)$.
\end{theorem}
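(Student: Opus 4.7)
The plan is to first reduce $\efxwc$ under leveled preferences to a simple structural condition, then exhibit a greedy balanced algorithm realising that condition.

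\textbf{Characterization of $\efxwc$ under leveled preferences.} By additivity, $v_i(A_i\setminus A_j)\ge v_i((A_j\setminus A_i)\setminus\{g\})$ rewrites as $v_i(A_i)\ge v_i(A_j\setminus\{g\})$. Since $|A_j\setminus\{g\}|=|A_j|-1$, the leveled property makes this inequality automatic whenever $|A_i|\ge|A_j|$ and forces it to fail whenever $|A_i|+1<|A_j|$. Therefore an allocation is $\efxwc$ iff (a) all pairwise bundle-size differences are at most $1$, and (b) whenever $|A_i|+1=|A_j|$, $v_i(A_i)\ge v_i(A_j\setminus\{g\})$ holds for every $g\in A_j\setminus A_i$. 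The two sides of~(b) are equi-cardinal, so leveled gives no help there; (b) is a genuine value constraint that the algorithm must enforce.

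\textbf{Algorithm and correctness.} I propose a greedy scheme: at each step select the agent $i^{*}$ of smallest current bundle size, breaking ties by smallest $v_{i^{*}}(A_{i^{*}})$, and let $i^{*}$ pick its favourite item with remaining copies and not already in $A_{i^{*}}$, skipping agents temporarily stuck by exclusivity. Iterate until every copy is placed. The size priority keeps bundle sizes balanced throughout, yielding~(a). For~(b), fix a pair $(i,j)$ with $|A_i|+1=|A_j|$, and use a round-matching argument on the chronological pick sequences $a_i^1,\dots,a_i^{|A_i|}$ and $a_j^1,\dots,a_j^{|A_j|}$: whenever $j$ picks $a_j^r$, either $a_j^r$ already lies in $A_i$ (a common item, which cancels on both sides of the $\efxwc$ inequality) or $a_j^r$ was still available to $i$ at some matched moment, forcing $v_i(a_i^{r'})\ge v_i(a_j^r)$. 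Summing yields $v_i(A_i)\ge v_i(A_j)-v_i(a_j^{\mathrm{last}})$, and combined with $v_i(a_j^{\mathrm{last}})\ge \min_{g\in A_j\setminus A_i}v_i(g)$ in the nontrivial case this delivers~(b).

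\textbf{Main obstacles and runtime.} The hardest step is the common-items complication in the matching: one cannot naively pair rounds index-by-index. I would sidestep this by strengthening the target to a \emph{dominance} statement---for each $r\le|A_i|$ the $r$-th largest $v_i$-value in $A_i$ dominates the $r$-th largest in $A_j$---and proving it via a Hall-type argument on $v_i$-level sets, leveraging that at each picking moment the chosen agent has both smallest bundle size and smallest own-value. A secondary concern is to preclude global deadlock: if every smallest-bundle agent were stuck, then some item type would need more than $n$ copies already distributed, contradicting $k_t\le n$ and exclusivity. The quoted runtime $O(n|\types|^2)$ is reached by organising the computation into $O(|\types|)$ phases with $O(n|\types|)$ bookkeeping each, maintaining a priority structure over agents (by size then value) together with per-agent favourite-available pointers.
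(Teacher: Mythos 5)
Your characterization of $\efxwc$ under leveled preferences is correct and matches the reduction implicit in the paper's argument: bundle sizes may differ by at most one, and the only non-trivial constraint is the boundary case $|A_i|+1=|A_j|$. The gap is in how you discharge that boundary case. A greedy/round-robin pick sequence yields an EF1-type guarantee --- the round-matching lets you delete the envied agent's \emph{first} (most valuable) pick --- whereas condition (b) requires being able to delete the \emph{least} valuable item of $A_j\setminus A_i$ under $v_i$. Your closing inequality chain runs the wrong way: from $v_i(A_i)\ge v_i(A_j)-v_i(a_j^{\mathrm{last}})$ together with $v_i(a_j^{\mathrm{last}})\ge\min_{g\in A_j\setminus A_i}v_i(g)$ you can only conclude something \emph{weaker} than (b), not (b) itself; you would need $v_i(a_j^{\mathrm{last}})\le\min_{g}v_i(g)$, which fails because $j$ picks by $v_j$, not by $v_i$. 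The proposed ``Hall-type dominance'' repair is not carried out, and the algorithm as specified is in fact incorrect: take two agents and five single-copy items $a,b,c,d,e$ with $v_1\equiv 10$ on all items and $v_2(a)=12$, $v_2=10$ elsewhere (both leveled). In the run where agent 1 wins the initial tie and takes $a$, the later size-and-own-value ties (both at $10$, then both at $20$) may again be broken in favor of agent 1, ending with $A_1=\{a,c,e\}$ and $A_2=\{b,d\}$; then $v_2(A_2)=20<22=v_2(A_1\setminus\{c\})$ with no common items, so $\efxwc$ fails. The tie-break ``smallest own value'' compares values across different valuation functions and cannot detect that agent 2 missed out on $a$.

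The paper avoids sequencing altogether and uses local search. Start from any cyclically assigned balanced allocation (sizes differ by at most one); while some lower-level agent $i$ $\efxwc$-envies an upper-level agent $j$, swap $g_{\min}=\argmin_{g\in A_i\setminus A_j}v_i(g)$ for $g_{\max}=\arg\max_{g\in A_j\setminus A_i}v_i(g)$. One first shows that such envy forces $v_i(g_{\max})>v_i(g_{\min})$, so $i$ strictly improves while cardinalities and exclusivity are preserved; the potential $\sum_{i:\,|A_i|=H-1}\sum_{g\in A_i}\omega_i(g)$ (sum of ordinal ranks over lower-level bundles) strictly increases each step and is bounded by $n|\types|^2$, giving the runtime. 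If you want to salvage a one-pass greedy, you need both a cross-valuation-aware tie-break and a genuinely new argument for (b); the swap-plus-potential route is substantially simpler.
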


\unless\ifdefined\full{}
\begin{proof}
We describe here the algorithm and defer the details to the full version. We choose an initial allocation $\allocs$ such that $|\alloci|-|\alloci[j]|\le 1$ for every pair of agents $i,j\in\agents$, e.g., by setting some arbitrary order $1,...,n$ over the agents, an arbitrary order $1,...,t$ over the good types, and allocating the $k_i$ copies of the next good to the next $k_i$ agents in a cyclic fashion. 
If all bundle sizes in $\allocs$ are the same, then $\allocs$ is $\efxwc$ (and moreover, EFX), we are done.

Otherwise, $\allocs$ has two bundle sizes (two levels). Let the number of goods in the upper level of $\allocs$ be $H$, and thus the number of goods in the lower level is $H-1$. If $\allocs$ is not $\efxwc$ we perform the following operation. Suppose that agent $i$ $\efxwc$-envies agent $j$. 
Agent $i$ must be at a lower level than agent $j$, otherwise after removing a good from $j$, agent $i$ is at a higher level and by the leveled preferences prefers its own bundle. 
Let $g_{\max} = \arg\max_{g \in \alloci[j] \setminus \alloci[i]} v_i(g), g_{\min} = \arg\min_{g \in \alloci[i] \setminus \alloci[j]} v_i(g)$.
Let agent $i$ get the bundle $(\alloci[i]\setminus \{g_{\min}\}) \cup \{g_{\max}\}$, and let agent~$j$ get the bundle $(\alloci[j]\setminus \{g_{\max}\}) \cup \{g_{\min}\}$. After this operation, the allocation remains exclusive. Agent $i$ gets a strictly improved bundle by her valuation (as we can prove), and both agents get a bundle with the same cardinality as before, thus maintaining the sets of lower-level and upper-level bundle agents unchanged. We repeat the process. 
\end{proof}
\fi

\ifdefined\full{}

\begin{proof}
We can choose an initial allocation $\allocs$ such that $|\alloci|-|\alloci[j]|\le 1$ for any $i,j\in\agents$, e.g., by setting some arbitrary order $1,...,n$ over the agents, an arbitrary order $1,...,t$ over the good types, and allocating the $k_i$ copies of the next good to the next $k_i$ agents in a cyclic fashion. If all bundle sizes in $\allocs$ are the same, then $\allocs$ is $\efxwc$ (and moreover, EFX), we are done.

Otherwise, $\allocs$ has two bundle sizes (two levels). Let the number of goods in the upper level of $\allocs$ be $H$, and thus the number of goods in the lower level is $H-1$. If the allocation $\allocs$ is not an $\efxwc$ allocation, we perform the following operation. Suppose that agent $i$ $\efxwc$-envies agent $j$ (that is, EFX-envies it after removing the goods they have in common). Agent $i$ must be at a lower level than agent $j$, otherwise after removing a good from $j$, agent $i$ is at a higher level and by the leveled preferences prefers its own bundle. 

It must hold that $\max_{g \in \alloci[j] \setminus \alloci[i]} v_i(g) > \min_{g \in \alloci[i] \setminus \alloci[j]} v_i(g)$, otherwise we have for any good $g' \in \alloci[j] \setminus \alloci[i]$:
\[
\begin{split}
    & v_i((\alloci[j] \setminus \alloci[i]) \setminus \{g'\}) \leq |(\alloci[j] \setminus \alloci[i]) \setminus \{g'\}|\max_{g \in \alloci[j] \setminus \alloci[i]} v_i(g) \leq \\
    & |\alloci[i] \setminus \alloci[j]|\min_{g \in \alloci[i] \setminus \alloci[j]} v_i(g) \leq v_i(\alloci[i] \setminus \alloci[j]), 
\end{split}
\]

in contradiction to our assumption of $\efxwc$-envy. 

Let $$g_{\max} = \arg\max_{g \in \alloci[j] \setminus \alloci} v_i(g), g_{\min} = \arg\min_{g \in \alloci[i] \setminus \alloci[j]} v_i(g).$$ Let agent $i$ get the bundle $(\alloci[i]\setminus \{g_{\min}\}) \cup \{g_{\max}\}$, and let agent $j$ get the bundle $(\alloci[j]\setminus \{g_{\max}\}) \cup \{g_{\min}\}$. After this operation, the allocation remains exclusive. Agent $i$ gets a strictly improved bundle by its valuation, and both agents get a bundle with the same cardinality as before, thus maintaining the sets of lower-level and upper-level bundle agents unchanged. We repeat the process.

We construct the potential function to show the number of steps is bounded and polynomial. For any good $g\in T$, let $\omega_i(g)$ be its ordinal position according to agent $i$'s preference over the goods, e.g., for the minimal good $g\in T$ by $i$ valuation we have $\omega_i(g) = 1$, and for the maximal good $g'$ we have $\omega_i(g') = |\types|$. We consider the potential function $$\psi(\allocs) = \sum_{\substack{i\in \mathcal{N} \\ |\alloci[i]| = H-1}} \sum_{g\in \alloci[i]} \omega_i(g).$$
Notice that at each step this potential function strictly increases as we replace some good with a strictly preferred good for some lower level agent. Also note that $0 \leq \min_{\allocs \in \ea{\items}} \psi(\allocs) \leq \max_{\allocs \in \ea{\items}} \psi(\allocs) \leq n |\types|^2$, and the function always returns an integer value. Thus the maximal number of substitution steps is in $O(n|\types|^2)$. 
 \end{proof}
\fi

\noindent{\bf Challenges.}
Many existence results of envy-based fairness notions for goods without copies rely on the primitive of envy-cycle canceling, first shown by \cite{lipton2004approximately}. We thus note an important technical difference in proving existence for $f_{wc}$ notions. For goods with copies, it is sometimes impossible to cancel an envy-cycle without breaking the fairness notion. 
This was first pointed out in \cite{EnvyCycleCancellationExample}. We give below another such example that has two additional properties: First, the allocation is $\efxwc$ before cancelling the envy-cycle, but not even $\efowc$ after the cancellation. Second, in our example the choice of which envy-cycle to cancel is immaterial to the difficulty arising, as there is only one envy cycle. 

 \begin{table}[]
     \centering
     \begin{tabular}{c|c|c|c|c|c|c|c}
          Goods & H & a & b & c & d & e & f\\
          \hline 
          Copies  & 2 & 1 & 1 & 2 & 2 & 2 &  1   \\
          \hline
           $\vau[1]$ & 2.5 & 1 & 1 &  1 & 1 & 0.1 & 0.1 \\
          \hline
          $\vau[2]$  & 2 & 1.5 & 1 &  0.7 & 0.7 & 0.7 & 0.7\\
          \hline
          $\vau[3]$ &  2 & 1 & 0.5 &  0.5 & 0.1 & 0.1 & 0.1\\
     \end{tabular}
     \caption{Envy-cycle cancellation failure for goods with copies.}
     \label{tab:cycle-cancle-small}
 \end{table}

 \begin{example}
 There are 3 agents and 7 good types. The valuation and the number of copies are listed in Table \ref{tab:cycle-cancle-small}. Note that these valuations have identical ordinal preferences. Consider the allocation $\alloci[1]=\{b,c,d,e,f\}$, $\alloci[2]=\{H,a\}$, $\alloci[3]=\{H,c,d,e\}$. It is an $\efxwc$ allocation. We have that agent~$1$ envies agent 2 and vice versa. Let us reallocate the bundles to resolve this envy-cycle. Even though everyone gets a better bundle, it is not $\efowc$, as agent 1 then gets the bundle $\{H,a\}$ but $\efowc$-envies the bundle $\{H,c,d,e\}$.
 \end{example}
Another intriguing property of goods with copies is that the MNW allocation is not even necessarily $\efowc$ (unlike the case of goods), as the following example shows. 
\begin{definition}
The \emph{\bf Nash welfare} of an allocation is the product $NW(\allocs) = \prod_{i\in \agents} v_i(\alloci)$.
The \emph{\bf maximum Nash welfare} allocation is MNW $= \arg \max_{\allocs} NW(\allocs)$. 
\end{definition}
 \begin{table}[]
     \centering
     \begin{tabular}{c|c|c|c|c}
          Goods &  a & b & c & d \\
          \hline 
          Copies  & 2 & 1 & 1 & 1  \\
          \hline
           $\vau[1]$ & 1 & 1 & 1 &  $\epsilon$ \\
          \hline
          $\vau[2]$  & 1 & $\epsilon$ & $\epsilon$ &  $\epsilon$ \\
          \hline
          $\vau[3]$ &  $\epsilon$ & $\epsilon$ & $\epsilon$ &  1 \\
     \end{tabular}
     \caption{An MNW allocation that is not $\efowc$.}
     \label{tab:mnw-not-ef1wc}
 \end{table}
\begin{example}
Consider Table~\ref{tab:mnw-not-ef1wc} with $\epsilon = 10^{-6}$. The MNW allocation is the exclusive allocation $\{\{a,b,c\}, \{a\}, \{d\} \}$. In this allocation agent 2 $\efowc$-envies agent 1. 
\end{example}


 \section{Discussion}
 \label{sec:discussion}
 To the best of our knowledge, we provide the first formal duality relationship between goods with copies and chores that establishes the equivalence of both settings for a broad class of fairness notions. Some of our results can be extended beyond additive valuations. In particular, Theorem~\ref{thm:wc-duality} holds for general valuations, and Proposition~\ref{prop:inclusive_efxwc_eflwc} holds for all monotone valuations, as indicated in the proof. Other results do not generalize. For example, Theorem~\ref{thm:efl-wc_separation} can not be extended, even to monotone and submodular valuations:
 

 \begin{definition}
A valuation $v$ of goods is \emph{\bf submodular} if $\forall S_1,S_2 \subseteq \items, v(S_1) + v(S_2) \geq v(S_1 \cup S_2) + v(S_1 \cap S_2)$.
\end{definition}
 
 
 \begin{example}[For monotone and submodular valuations, there is an $\eflwc$ allocation with at most $\frac{1}{n}$-MMS for goods with copies]
\label{ex:third-MMS-submodular}
\unless\ifdefined\full{}
See full version for details.
\fi
\ifdefined\full{}
Consider $n$ agents and goods $\{y^{i}_{j}\}^{1\leq i\leq n}_{1\leq j \leq n}$, each with one copy, where $v$ is the valuation function of all agents:
$$v(S) = |\{i\ |\ \exists j, y^i_j \in S\}|.$$
In words, there are $n^2$ goods, partitioned into $n$ groups, where goods in the same group are substitutes, and the valuation of a bundle of goods equals to the number of good groups present in the bundle. Then, 
 $$\allocs = \{ \underbrace{\{y^{\theta}_{1}, \ldots,  y^{\theta}_n\}}_{1\leq \theta \leq n} \} $$ is $\eflwc$ (in fact it is envy-free), but $v(\alloci[i]) = 1$ for each agent $i$, while a MMS of $n$ is guaranteed by $$\allocs ' = \{ \underbrace{\{y^{\theta}_1, \ldots, y^{(\theta + i) \mod n}_i, \ldots, y^{(\theta + n-1) \mod n}_n\}}_{1\leq \theta \leq n} \}. $$
 \fi
 \end{example}
 
For future work, we believe it is of interest to further investigate the duality phenomenon studied in this paper in the context of other fair division settings, e.g., divisible items.
The main open challenge for the new fairness notions that we propose (such as $\efxwc, \efowc$) is to settle their general existence for goods with copies. As a special case, the existence of EFX, EFL for chores is open.
 %
 It is also worth investigating a more flexible model where the number of copies of each good is only loosely set, e.g., constrained between a minimal and maximal value.
 
 \section*{Acknowledgments}
 Yotam Gafni and Ron Lavi were partially supported by the ISF-NSFC joint research program (grant No.~2560/17). Xin Huang was partially supported by the Aly-Kaufman Fellowship. This research was supported by the Israel Science Foundation (grant No.~336/18). 
 
 An early version of this work was presented at the 8th International Workshop on Computational Social Choice (COMSOC), 2021.%
 \footnote{The COMSOC version is accessible at \url{https://drive.google.com/open?id=17b5r_kXT8lC1fzvKkU7-0Vh_WRTAIosk}.} 
 



\bibliography{aaai22.bib}

\ifdefined\aps{}
\appendix
\label{app:APS}

\section{Appendix: Duality for AnyPrice Share}

We now show a sense in which duality holds for the AnyPrice Share \cite{fairshare_arbitraryentitlements}, without using Theorem~\ref{thm-meta-share}. First, we define the notion of fair shares with entitlements:

\begin{definition}
\label{def:share-fair-entitlements}
Given a valuation $v$, 
an item set $\items$, and an entitlement $b$, a {\bf share} function $s$ outputs a real value, i.e., $s(v,\items, b)\in \mathbb{R}$. An exclusive allocation $\allocs$ is called {\bf $s$-share fair} if we have $\vai{\alloci}\ge s(\vau[i],\items, b_i), \forall i\in\agents$. 
\end{definition}

We next define the notion of AnyPrice Share. 

\begin{definition}
(AnyPrice Share for goods and chores)
Let $\Delta(\items)$ be the set of probability distribution over items $\items$ such that the probability of the same type items are the same. 

For goods:
    $$\text{APS}(i,\val,b_i)= \min_{\mathbf{p} \in \Delta(\items)}\max_{\allocs \in \ea{\items}  }\left\{\vai{\alloci}\mid  \sum_{g\in \alloci}p_g\le b_i \right\}$$
    
    
For chores:  
     $$\text{APS}(i,\val,b_i)= \max_{\mathbf{p} \in \Delta(\items)}\min_{\allocs\in \ea{\items}  }\left\{\vai{\alloci}\mid  \sum_{c\in \alloci}p_c\ge b_i \right\}$$

For chores, this takes a new interpretation: Rather than having to choose items with prices within a budget, the agent has to take chores of at least total weight of $b$. 

\end{definition}

\begin{claim}
If we add $n$ copies of the same good $g$, APS moves by the value of the item. 
\end{claim}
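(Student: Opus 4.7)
The plan is to exploit the fact that in any exclusive allocation of the augmented set $\items'$ (which has $n$ copies of $g$ on top of $\items$), each of the $n$ agents must receive \emph{exactly one} copy of $g$. This gives a natural bijection $A \mapsto A \cup \{g\}$ from $\ea{\items}$ to $\ea{\items'}$ under which values shift uniformly by $v(g)$, i.e., $v(A \cup \{g\}) = v(A) + v(g)$. The whole proof then reduces to showing that this value shift carries over to the min-max structure defining APS, which I plan to do via two matching inequalities.

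For the upper bound, the plan is to take an optimal price distribution $p^* \in \Delta(\items)$ achieving the APS on $\items$, and extend it to $p' \in \Delta(\items')$ by placing zero probability on every copy of $g$ (keeping $p'|_{\items} = p^*$, so $p'$ is still a probability distribution satisfying the within-type symmetry condition). Under $p'$, any bundle $A \cup \{g\}$ costs exactly $\sum_{t \in A} p^*_t$, so the inner maximum for $\items'$ equals $v(g)$ plus the inner maximum for $\items$ under $p^*$. Taking the adversary's minimum gives $\text{APS}(v, \items', b_i) \le v(g) + \text{APS}(v, \items, b_i)$.

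For the lower bound, fix any $p' \in \Delta(\items')$ and let $\pi := p'_g$ be the common probability of each $g$-copy (so total mass on $g$-copies is $n\pi$ and mass on $\items$ is $1 - n\pi$). Each exclusive allocation of $\items'$ forces exactly one $g$-copy per agent, so the inner maximum becomes $v(g) + \max\{v(A) : A \subseteq \items,\ \sum_{t \in A} p'_t \le b_i - \pi\}$. The plan is to define $q \in \Delta(\items)$ by renormalizing $q_t := p'_t / (1 - n\pi)$; the constraint $\sum_{t \in A} p'_t \le b_i - \pi$ is then equivalent to $\sum_{t \in A} q_t \le (b_i - \pi)/(1 - n\pi)$. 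In the equal-entitlement regime $b_i = 1/n$, the key algebraic identity $(1/n - \pi)/(1 - n\pi) = 1/n = b_i$ holds, so the budget is preserved under the rescaling, and the inner max is at least $\text{APS}(v, \items, b_i)$. The degenerate case $\pi = 1/n$ is handled separately: all $\items$-items are free, so the agent affords $\items$ entirely and the bound is trivial. Taking min over $p'$ completes the lower bound.

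The main obstacle is the normalization step in the lower bound: naively, shifting the price mass between the $g$-copies and $\items$-items seems to alter the agent's effective budget, but the identity $(b_i - \pi)/(1 - n\pi) = b_i$ only holds when $n b_i = 1$, which is the natural normalization in this setting. This is precisely the reason linearity (in the sense of Definition~\ref{def:linear-share}) fails for APS while this copy-shift property still holds: the duality here hinges on a matching between the number of copies $n$ and the structural role of $n$ in the price normalization, rather than on a general shift-invariance in the agent's value.
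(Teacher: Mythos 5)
Your proof is correct, and its core step --- rescaling the price mass on the original items by $1-n\pi$ so that the identity $(1/n-\pi)/(1-n\pi)=1/n$ transfers an affordable APS bundle between the two instances --- is exactly the scaling argument in the paper's own proof (the paper scales the original price vector down by $1-n\alpha$; you renormalize the augmented one up, which is the same move in reverse). The paper only writes out the lower-bound direction (APS of the augmented instance is at least $\mathrm{APS}+v(g)$), so your upper bound via zero-priced copies is a correct and welcome completion of the claimed equality, and your remark that everything hinges on the budget being exactly $1/n$ (which is why the claim survives even though linearity in the sense of Definition~\ref{def:linear-share} fails) matches the paper's framing.
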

\begin{proof}
Choose a price $\alpha \leq \frac{1}{n}$ for the good. We know that there is a price vector for the other items with
$$\sum p_c = 1, \sum_{c\in APS\_set} p_c \leq \frac{1}{n},$$
After setting price $\alpha$ for the extra item, we have $1 - n \alpha$ budget left for the other items. Scale the vector $p_c \rightarrow (1 - n \alpha) p_c$, then
$$\sum_{c\in APS\_set} p_c (1 - n\alpha) + \alpha \leq \frac{1}{n}(1 - n\alpha) + \alpha = \frac{1}{n},$$
and so this price vector shows that $APS$ of the new instance is at least $APS + v(g)$. 
\end{proof}

\begin{claim}
If we have goods with $k$ copies (all of them), then APS is dual. 
\end{claim}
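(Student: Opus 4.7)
My plan is to establish the APS identity
\[
\text{APS}(v,\items,1/n) \;=\; v(\types) + \text{APS}(\dual{v},\dual{\items},1/n),
\]
from which the claim follows exactly as in the proof of Theorem~\ref{thm-meta-share}: the bundlewise relation $\dual{v}(\dual{\alloci}) = v(\alloci)-v(\types)$ immediately translates the goods share-inequality $v(\alloci)\ge \text{APS}(v,\items,1/n)$ into the chores share-inequality $\dual{v}(\dual{\alloci})\ge \text{APS}(\dual{v},\dual{\items},1/n)$, and the reverse implication follows by symmetry.

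The first technical step is to reparametrize both price distributions onto a common simplex over \emph{types} rather than items. In the goods APS with $k$ copies per type, every $p\in\Delta(\items)$ satisfies $\sum_{t\in\types} k\,p_t=1$, so $\pi_t := k\,p_t$ is a probability distribution over $\types$ and the budget constraint $\sum_{g\in \alloci}p_g\le 1/n$ becomes $\sum_{t\in \alloci}\pi_t\le k/n$. The analogous substitution $\pi'_t := (n-k)q_t$ on the chores side, combined with the bundle complement $\alloci=\types\setminus\dual{\alloci}$ and the identity $\dual{v}(\dual{\alloci})=v(\alloci)-v(\types)$, converts the chores constraint $\sum_{t\in\dual{\alloci}}q_t\ge 1/n$ into the same type-level form $\sum_{t\in\alloci}\pi'_t\le k/n$. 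After these reductions the goods APS becomes a minimax $\min_\pi\max_A\{v(A):\sum_{t\in A}\pi_t\le k/n\}$, and the chores APS equals $-v(\types)$ plus the corresponding maximin $\max_{\pi'}\min_A\{v(A):\sum_{t\in A}\pi'_t\le k/n\}$ over the identical feasible polytope.

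The main obstacle is to show this minimax equals the maximin --- the structural feature that enables APS to be dualized even though it is not a linear share in the sense of Definition~\ref{def:linear-share}. My approach is to relax the inner combinatorial optimization over $A\subseteq\types$ to the natural LP over fractional indicators $x\in[0,1]^{|\types|}$ with objective $\sum_t v(t)\,x_t$, obtaining a bilinear saddle-point problem on compact convex sets; Sion's minimax theorem (or equivalently strong LP duality applied to the inner program for each fixed price vector) yields minimax--maximin equality for the relaxation. Integrality is not an issue on either side, since for any fixed price vector the feasible region in $x$ is a unit box intersected with a single knapsack inequality, whose linear-objective optima are attained at integer vertices by a standard threshold/greedy argument. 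Combining the reparametrization with minimax equality produces the APS identity displayed above, and the bundlewise argument of the first paragraph then completes the proof.
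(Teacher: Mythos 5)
Your reparametrization onto the type simplex and the bundlewise identity $\dual{v}(\dual{\alloci})=v(\alloci)-v(\types)$ are correct and coincide with the first half of the paper's argument (which performs the same substitution via $kp_i=p_t$ on the goods side and $(n-k)p_i=p_t$ on the chores side, and shows by complementation that the budget constraint $\sum_{g\in \alloci}p_g\le \frac{1}{n}$ holds for $\alloci$ iff the weight constraint $\sum_{c\in \dual{\alloci}}p_c\ge\frac{1}{n}$ holds for $\dual{\alloci}$). The gap is in your ``main obstacle'' step. The identity you reduce to, $\min_{\pi}\max_{A}\{v(A):\sum_{t\in A}\pi_t\le k/n\}=\max_{\pi'}\min_{A}\{v(A):\sum_{t\in A}\pi'_t\le k/n\}$, is not a minimax/maximin pair in the sense of Sion's theorem or LP duality: those tools equate $\min_x\max_y f$ with $\max_y\min_x f$, i.e., they swap the \emph{order} of the two optimizations over distinct variable blocks, whereas your two sides keep the order (prices outside, bundles inside) and instead swap \emph{which block is minimized}. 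No minimax theorem equates $\min_x\max_y f$ with $\max_x\min_y f$, and relaxing $A$ to $x\in[0,1]^{|\types|}$ does not change this. Worse, the target identity is false: $A=\emptyset$ (equivalently $\dual{A}=\types$, which is always part of an exclusive allocation of $\dual{\items}$ and always meets the weight quota since $1/(n-k)\ge 1/n$) satisfies $\sum_{t\in\emptyset}\pi'_t=0\le k/n$ for every $\pi'$, so for nonnegative $v$ the right-hand side is $0$, while the left-hand side is positive in trivial instances (e.g., $n=2$, $k=1$, two unit-value goods gives $1\ne 0$).

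The paper needs no minimax argument because, in the way the chores APS is actually used there (see Example~\ref{ex:APS} and the proof of this claim), the inner optimization on the chores side is the agent's \emph{best} bundle among those meeting the weight quota; under the complement bijection both inner problems are then maxima of $v$ over corresponding feasible families, both outer problems are minima over corresponding price vectors, and the whole claim reduces to the constant shift $\dual{v}(\dual{A})=v(A)-v(\types)$, exactly in the spirit of Theorem~\ref{thm-meta-share}. If one instead insists on the displayed $\max_{\mathbf{p}}\min_{\allocs}$ form of the chores definition, the inner minimum is always attained at $\dual{A}=\types$ and the chores APS degenerates to $-v(\types)$, so the claim itself fails; either way, your proof cannot be completed by a better minimax argument --- the fix is to align the direction of the inner optimization, not to bridge a min--max gap.
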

\begin{proof}
We consider the prices for all the copies of the same type as $p_t$, they satisfy $kp_i = p_t$ for a specific item $i$ of that type. 

We have for the APS set
$$\sum_{i \in APS\_set} p_i = \sum_{i \in APS\_set}  \frac{1}{k} p_t \leq \frac{1}{n},$$
i.e.,
$$1 - \sum_{i \in T \setminus (APS\_set)} p_t = \sum_{i \in APS\_set} p_t \leq \frac{k}{n},$$
i.e.,
$$\frac{n-k}{n} \leq \sum_{i \in T \setminus (APS\_set)} p_t,$$
i.e., 
$$\frac{1}{n} \leq \sum_{i \in T \setminus (APS\_set)} \frac{1}{n-k} p_t = \sum_{i \in T \setminus (APS\_set)} p_i,$$
which is the APS condition for chores (since in the dual instance $(n-k)p_i = p_t$. 

So we conclude that the APS for chores is at least $APS - v_i(T)$, and the reverse argument will show it's exactly so. 

\end{proof}

We first demonstrate why this is the natural extension of the APS definition to chores using a chores version of Example 5 of \cite{bestofbothworlds}. 

\begin{example}
\label{ex:APS}
Consider $n=4$ agents, 5 chores with 3 copies each, valuations $v = -2,...,v_5 = -6$, and budget $b = \frac{3}{4}$. We have $PROP = -3 \cdot \frac{2 + ... + 6}{4} = -15$. APS is at most $-16$, by considering prices $p_1 = p_2 = 0.125, p_3 = 0.16, p_4 = 0.251, p_5 = 0.329$. It is at least $-16$. Assume by contradiction otherwise: Then any bundle with lower valuation that is part of an exclusive allocation must include chores $3,4$ and $5$, as well as at least one of the chores $1,2$. Thus, it must hold that $\min\{p_3,p_4,p_5\} > 0.25$ (Otherwise this chore can be excluded from the bundle). Therefore, $p_3 + p_4 + p_5 > 0.75$, which results in $APS = -15$. We conclude that $APS = -16$, which has two nice properties: It maintains $APS \leq PROP$ as in the goods case, and it is exactly $-v(\items) = 20$ lower than the $APS$ of the dual goods instance with $n=4$ agents, the reverse (positive) valuation $-v$, one copy of each good, and a budget of $b = \frac{1}{4}$. For a linear share, we would expect to have this property with regards to the dual goods instance with the \emph{same} budget of $b = \frac{3}{4}$, but this does not hold. 
\end{example}

We now see by the following proposition that the last property is indeed general and matches the results of Theorem~\ref{thm-meta-share}. 

\begin{prop}
For APS, an exclusive allocation $\allocs$ is $s$-share fair for $\items, \val, b_i$ iff $\dual{\allocs}$ is $s$-share fair for $\dual{\items}, \dual{\val}, 1-b_i$.
\end{prop}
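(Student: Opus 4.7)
The plan is to reduce the proposition to an equality between the two APS values (up to a constant shift) and then establish that equality using a bijection on price distributions. Specifically, combining additivity with the duality $\dual{\val}=-\val$ and $\dual{\alloci}=\types\setminus\alloci$ gives
\[
\dual{\val}(\dual{\alloci}) \;=\; -\val(\types\setminus\alloci) \;=\; \val(\alloci)-\val(\types),
\]
so the primal fairness $\val(\alloci)\ge \text{APS}(i,\val,b_i)$ and the dual fairness $\dual{\val}(\dual{\alloci})\ge \text{APS}(i,\dual{\val},1-b_i)$ are equivalent provided
\[
\text{APS}(i,\val,b_i)-\val(\types)=\text{APS}(i,\dual{\val},1-b_i).
\]
So the whole proof reduces to establishing this identity between the two APS quantities.

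Next, I would build a bijection $p \leftrightarrow \dual{p}$ between $\Delta(\items)$ and $\Delta(\dual{\items})$ that lifts to a bijection between the feasibility sets. The natural candidate, generalizing the uniform-multiplicity construction used earlier in Section~\ref{app:APS}, is to match per-type aggregate prices: send a distribution with per-copy price $p_t/k_t$ on type $t$ in $\items$ to the distribution in $\Delta(\dual{\items})$ with per-copy price $p_t/(n-k_t)$. This keeps per-type totals equal and keeps the same-type equal-probability property. Then for any exclusive bundle $A$, one translates
\[
\sum_{g\in A}p_g \;=\; \sum_{t\in A}p_t/k_t, \qquad \sum_{c\in\dual{A}}\dual{p}_c \;=\; \sum_{t\in\dual{A}}p_t/(n-k_t),
\]
and the goal is that the primal constraint $\sum_{g\in A}p_g\le b_i$ matches the dual constraint $\sum_{c\in\dual{A}}\dual{p}_c\ge 1-b_i$. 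For the uniform $k_t=k$ case the correspondence collapses to the one already verified in the earlier claim; the general case may require adjusting the form of the bijection, but the structural role is the same.

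Given the feasibility correspondence, the inner optimization transforms cleanly: using $\val(A)=\val(\types)+\dual{\val}(\dual{A})$,
\[
\max_{A\in\ea{\items}}\{\val(A):\sum_g p_g\le b_i\}-\val(\types) \;=\; \max_{\dual{A}\in\ea{\dual{\items}}}\{\dual{\val}(\dual{A}):\sum_c \dual{p}_c\ge 1-b_i\},
\]
and the outer optimization over prices inverts in the appropriate direction (from $\min_p$ in the goods APS to $\max_{\dual{p}}$ in the chores APS) exactly because of the sign flip in the valuation. Stringing these together yields the required identity.

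The main obstacle is verifying the feasibility bijection in full generality. When copy multiplicities are heterogeneous, the per-copy rescaling above changes how budget thresholds shift between the primal and dual feasibility sets, and the ``same-probability-within-type'' condition must be preserved throughout. Getting the bijection exactly right -- so that a primal budget $b_i$ corresponds to a dual budget of $1-b_i$ rather than some more complicated function of $b_i$ and the $k_t$'s -- is where the real work lies. Once that step is in place, the rest is bookkeeping on top of additivity and the general duality machinery developed in Section~\ref{sec-dual}.
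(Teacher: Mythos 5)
Your reduction of the proposition to the identity $\text{APS}(i,v,b_i)-v(\types)=\text{APS}(i,\dual{v},1-b_i)$ is exactly the paper's first move (it is the ``linearly related'' shift by $d=v(\types)$ already used in Theorem~\ref{thm-meta-share}). The gap is in how you try to prove that identity. Your proposed bijection rescales per-copy prices from $p_t/k_t$ to $p_t/(n-k_t)$; since an exclusive bundle contains at most one copy of each type, the resulting bundle sums are $\sum_{t\in A}p_t/k_t$ on the primal side and $\sum_{t\notin A}p_t/(n-k_t)$ on the dual side, and for heterogeneous multiplicities these two quantities do not add up to $1$, so the primal budget constraint $\le b_i$ does not translate into the dual constraint $\ge 1-b_i$. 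You flag this yourself as ``where the real work lies,'' but that feasibility correspondence is the entire content of the proposition, so as written the proof is incomplete, and the specific bijection you float does not repair it.

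The paper closes this step without any rescaling: it treats the price vector as a single distribution $\mathbf{p}\in\Delta^{|\types|}$ over \emph{types} (one number per type, summing to $1$), used unchanged on both sides. Then for any exclusive bundle, $\sum_{j\in\dual{\alloci}}p_j\ge 1-b_i$ holds iff $\sum_{j\in\types\setminus\dual{\alloci}}p_j=\sum_{j\in\alloci}p_j\le b_i$, because $\dual{\alloci}=\types\setminus\alloci$ and the type prices sum to $1$; combined with $-v_i(\dual{\alloci})=v_i(\alloci)-v_i(\types)$ this rewrites the dual APS program as the primal one shifted by $-v_i(\types)$, term by term, with the identity map on prices. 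If you adopt that type-level normalization, your outline goes through and the ``real work'' disappears; with the per-copy normalization you chose, the $b_i\leftrightarrow 1-b_i$ correspondence you need simply fails.
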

\begin{proof}
We assume a goods instance ($v_i$ is non-negative for any agent $i \in \agents$). The argument for the converse is similar. 

As noted in the proof of Theorem~\ref{thm-meta-share}, item sets $\items$ and $\dual{\items}$ are linearly related w.r.t. valuations $\vai, \dual{\vau}$ with $d = \vai{\types}$.

Suppose that $\allocs$ is $s$-share fair for $\items, v_i, b$. By definition, we have $\vai{\alloci}\ge s(\vau,\items, b)$ for all $i\in\agents$. 

\[
\begin{split}
& s(\dual{\vau},\dual{\items}, 1 - b) =  \min_{\mathbf{p} \in \Delta^{|\types|}}\max_{\substack{\dual{\alloc} \in \mathcal{X}(\dual{\items}) \\ \sum_{j\in \dual{\alloci}}p_j \geq 1 - b}}- \vai{\dual{\alloci[j]}} = \\
& \min_{\mathbf{p} \in \Delta^{|\types|}}\max_{\substack{\dual{\alloc} \in \mathcal{X}(\dual{\items}) \\ 1 - \sum_{j\in \dual{\alloci}}p_j \leq b}}- \vai{\dual{\alloci[j]}} = \\
& \min_{\mathbf{p} \in \Delta^{|\types|}}\max_{\substack{\dual{\alloc} \in \mathcal{X}(\dual{\items}) \\ \sum_{j\in \types \setminus \dual{\alloci}}p_j \leq b}}- \vai{\dual{\alloci[j]}} = \\
& -\vai{\types} + \min_{\mathbf{p} \in \Delta^{|\types|}}\max_{\substack{\dual{\alloc} \in \mathcal{X}(\dual{\items}) \\ \sum_{j\in \types \setminus \dual{\alloci}}p_j \leq b}} \vai{\types} - \vai{\dual{\alloci[j]}} = \\
& -\vai{\types} + \min_{\mathbf{p} \in \Delta^{|\types|}}\max_{\substack{\alloc \in \mathcal{X}(\items) \\ \sum_{j\in \alloci}p_j \leq b}} \vai{\alloci[j]} = s(\vau, \items, b) -\vai{\types}
\end{split}
\]

We now have
\[
\begin{split} \dual{\vau}(\dual{\alloci})  = & \vai{\alloci} - \vai{\types} \geq s(\vau, \items, b) - \vai{\types} = \\
& s(\dual{\vau}, \dual{\items}, 1-b). 
\end{split}
\]
\end{proof}

\begin{remark}
We remark that: 

\begin{itemize} 

\item Although not linear, the entitlement transformation $b_i \rightarrow 1 - b_i$ has an intuitive dual sense of a complement, as all entitlements are normalized with respect to $1$. 

\item A useful byproduct of the duality meta-theory is the ability to substantiate a correct translation of notions from goods to chores settings. To the best of our knowledge, AnyPrice Share was not defined for chores prior to this work. By providing a definition, and proving it is dual with the goods notion, we find that this substantiates it being the correct translation for chores. 
\end{itemize}
\end{remark}

\fi

\end{document}